\newtheorem{fact}[theorem]{Fact}
\crefname{fact}{Fact}{Facts}
\newcommand{\whp}{\textrm{whp}\xspace}
\renewcommand{\O}{\mathcal{O}}
\renewcommand{\exp}[1]{\mathrm{exp}(#1)}
\newcommand{\expb}[1]{\mathrm{exp}\bigg(#1\bigg)}
\DeclareMathOperator{\Po}{Po}
\DeclareMathOperator{\Bin}{Bin}
\let\P=\PP
\def\P{\mathbf{P}}
\def\hP{\mathbf{\hat{P}}}
\def\peel{\mathsf{peel}}
\def\root{\mathrm{root}}
\def\q#1{q^{(#1)}}
\def\sL{\operatorname{\begin{tikzpicture}[scale=0.06,baseline=0.5,line cap=round]
    \draw[very thick] (0,0) -- (-2,2) -- (0,4);
    \draw[very thick] (1.5,0) -- (-0.5,2) -- (1.5,4);
\end{tikzpicture}}}
\def\sR{\operatorname{\begin{tikzpicture}[xscale=-0.06,yscale=0.06,baseline=0.5,line cap=round]
    \draw[very thick] (0,0) -- (-2,2) -- (0,4);
    \draw[very thick] (1.5,0) -- (-0.5,2) -- (1.5,4);
\end{tikzpicture}}}
\def\ero{\mathrm{er}}
\def\co{\mathrm{co}}
\def\hut{\expandafter\hat}
\def\U{\mathcal{U}}
\def\ms{\textrm{\textmu s}}
\def\ns{\textrm{ns}}
\title{Dense Peelable Random Uniform Hypergraphs}
\titlerunning{Dense Peelable Hypergraphs}
\author{Martin Dietzfelbinger}{Technische Universität Ilmenau, Germany}{martin.dietzfelbinger@tu-ilmenau.de}{https://orcid.org/0000-0001-5484-3474}{}
\author{Stefan Walzer}{Technische Universität Ilmenau, Germany}{stefan.walzer@tu-ilmenau.de}{https://orcid.org/0000-0002-6477-0106}{}
\authorrunning{M.\,Dietzfelbinger and S.\,Walzer}
\keywords{Random Hypergraphs, Peeling Threshold, 2-Core, Hashing, Retrieval, Succinct Data Structure, Linear Time Algorithm.}
\begin{document}

\let\paragraph=\subparagraph
    \maketitle
 \begin{abstract}
    We describe a new family of $k$-uniform hypergraphs with independent random edges. The hypergraphs have a high probability of being \emph{peelable}, i.e. to admit no sub-hypergraph of minimum degree $2$, even when the edge density (number of edges over vertices) is close to $1$.
    
    In our construction, the vertex set is partitioned into linearly arranged \emph{segments} and each edge is incident to random vertices of $k$ consecutive segments. Quite surprisingly, the linear geometry allows our graphs to be peeled “from the outside in”. The density thresholds~$f_k$ for peelability of our hypergraphs ($f_3 ≈ 0.918$, $f_4 ≈ 0.977$, $f_5 ≈ 0.992$,~…) are well beyond the corresponding thresholds ($c_3 ≈ 0.818$, $c_4 ≈ 0.772$, $c_5 ≈ 0.702$,~…) of standard $k$-uniform random hypergraphs.
    
    To get a grip on $f_k$, we analyse an idealised peeling process on the random weak limit of our hypergraph family. The process can be described in terms of an operator on $[0,1]^ℤ$ and $f_k$ can be linked to thresholds relating to the operator. These thresholds are then tractable with numerical methods.
    
    Random hypergraphs underlie the construction of various data structures based on hashing, for instance invertible Bloom filters, perfect hash functions, retrieval data structures, error correcting codes and cuckoo hash tables, where inputs are mapped to edges using hash functions.
    The data structures frequently rely on peelability of the hypergraph or peelability allows for simple linear time algorithms. Memory efficiency is closely tied to edge density while worst and average case query times are tied to maximum and average edge size.
    
    To demonstrate the usefulness of our construction, we used our $3$-uniform hypergraphs as a drop-in replacement for the standard $3$-uniform hypergraphs in a retrieval data structure by Botelho et al \cite{BPZ:Practical:2013}. This reduces memory usage from $1.23m$ bits to $1.12m$ bits ($m$ being the input size) with almost no change in running time. Using $k > 3$ attains, at small sacrifices in running time, further improvements to memory usage.
\end{abstract}

\nocite{DR:Towards:2012}


\section{Introduction}
\label{sec:intro}
The \emph{core} of a hypergraph $H= (V,E)$ is the largest sub-hypergraph of $H$ with minimum degree at least $2$. The core can be obtained by \emph{peeling}, which means repeatedly choosing a vertex of degree $0$ or $1$ and removing it (and the incident edge if present) from the hypergraph, until no such vertex exists. If the core of $H$ is empty, then $H$ is called \emph{peelable}.
\subparagraph{The significance of peelability.} Hypergraphs underlie many hashing based data structures and peelability is often necessary for proper operation or allows for simple linear time algorithms. We list a few examples.
\def\fatitem#1{\item {\textbf{#1}}\ }
\begin{itemize}
        \fatitem{Invertible Bloom Lookup Tables.} IBLTs  \cite{GM:Invertable:2011} are based on Bloomier filters \cite{CC:Bloomier_Filters:2008} which are based on Bloom filters \cite{B:Space:1970}. Each element is inserted in several random positions in a hash table. Any cell stores the \textsc{xor} of all elements that have been inserted into it. A \textsc{List-Entries} query on an IBLT can recover all elements of the table precisely if the underlying hypergraph is peelable. Among other things, IBLTs have been used to construct error correcting codes \cite{MV:Biff:2012} and to solve the set reconciliation and straggler identification problems  \cite{EG:StragglerIdentification:2011}.
        \fatitem{Erasure Correcting Codes.} To construct capacity achieving erasure codes, the authors of \cite{LMSS:Efficient_Erasure:2001} consider a hypergraph where $V$ corresponds to parity check bits and $E$ to message bits that were lost during transmission. A message bit is incident to precisely those check bits to which it contributed. Correct decoding hinges on peelability of the hypergraph.

        \fatitem{Cuckoo Hashing and XORSAT.} In the context of cuckoo hash tables \cite{DW07:Balanced:2007,M:Some_Open:2009,PR:Cuckoo:2004} and solving random \textsc{xorsat} formulas \cite{DM:The:3:XORSAT:2002,FP:Sharp:2012,PS:The:Satisfiability:2016}, (partial) peelability of the underlying hypergraph makes placing all (some) keys or solving the linear system (eliminating some variables) particularly simple.
        \fatitem{Retrieval and Perfect Hashing.} The retrieval problem (considered later in \cref{sec:experiments}) occurs in the context of constructing perfect hash functions \cite{BBOVV:Cache-Oblivious-Peeling:14,B:Near-Optimal,BPZ:Simple:2007,BPZ:Practical:2013,MWHC:A_Family:1996}. The known approaches involve finding a solution $z: V → R$ for a system $(\sum_{v ∈ e} z(v) = f(e))_{e ∈ E}$ of equations where $H=(V,E)$ is a hypergraph, $f: E → R$ a function and $R$ a small set. If $R$ is a field, then the incidence matrix of $H$ needs to have full rank over $R$ to guarantee the existence of a solution. If $H$ is peelable however, then the existence of a solution is guaranteed even if $R$ only has a group structure. Moreover, it can be computed in linear time.
\end{itemize}
In these contexts, the hypergraph typically has vertex set $[n] = \{1,…,n\}$ and for each element $x$ of an input set $S$, an edge $e_x ⊂ [n]$ is created with incidences chosen via hash functions. For theoretical considerations, the edges $(e_x)_{x ∈ S}$ are often assumed to be independent random variables. This has proven to be a good model for practical settings, even though perfect independence is not achieved by most practical hash functions. An important choice left to the algorithm designer is the distribution of $e_x$.
\subparagraph{Previous work.} If the distribution is such that $\O(n)$ edges have degree $2$ or less (in particular if $H$ is a graph with $\O(n)$ edges), then – due to the well-known “birthday paradox” – there is a constant probability that an edge is repeated. In that case, $H$ is clearly not peelable. The simplest workable candidate for the distribution of $e_x$ is therefore to pick a constant $k ≥ 3$ and let $e_x$ contain $k$ vertices chosen independently and uniformly at random. We refer to these standard hypergraphs as \emph{$k$-uniform Erdős-Renyi hypergraphs} $H^k_{n,cn}$ where $c$ is the \emph{edge density}, i.e.~the number of edges over the number of vertices. Corresponding \emph{peelability thresholds} $c_k$ have been determined in \cite{Molloy05:Cores-in-random-hypergraphs} meaning if $c < c_k$ then $H^k_{n,cn}$ is peelable with high probability (whp), i.e.~with probability approaching $1$ as $n→∞$ and if $c > c_k$ then $H^k_{n,cn}$ is not peelable whp. The largest threshold is $c_3 ≈ 0.818$. Since the edge density is often tightly linked to a performance metric (e.g. memory efficiency of a dictionary, rate of a code) a density closer to $1$ would be desirable, but we know of only two alternative constructions.

To obtain erasure codes with high rates the authors of \cite{LMSS:Efficient_Erasure:2001}\label{page:LMSS} construct for any $D ∈ ℕ$ hypergraphs with edge sizes in $\{5,…,D+4\}$, average edge size $≈ \ln D+3$ and edge density $1- 1/D$ that are peelable whp. In particular, this yields peelable hypergraphs with edge densities arbitrarily close to $1$. A downside is that the high maximum edge size can lead to worst case query times of $Θ(D)$ in certain contexts. Motivated by this, the author of \cite{R:Mixed:2013} looked into non-uniform hypergraphs with constant maximum edge size. Focusing on hypergraphs with two admissible edge sizes, he found for example that mixing edges of size $3$ and size $21$ yields a family of hypergraphs with peelability threshold $≈0.92$.

\subparagraph{Our construction.} In this paper we introduce and analyse a new distribution on edges that yields $k$-uniform hypergraphs with high peelability thresholds that perform well in practical algorithms.

We call our hypergraphs \emph{fuse graphs} (as in the cord attached to a firecracker). There is an underlying linear geometry and similar to how fire proceeds linearly through a lit fuse,
the peeling process proceeds linearly through our hypergraphs, in the sense that vertices on the inside of the line tend to only become peelable after vertices closer to the end of the line have already been removed. 

Formally, for $k ≥ 3$, $ℓ ∈ ℕ$ and $c ∈ ℝ^+$ we define the family $(F(n,k,c,ℓ))_{n∈ℕ}$ of $k$-uniform fuse graphs as follows. The vertex set is $V = \{1,…,n(ℓ+k-1)\}$ where for $i ∈ I := \{0,…,ℓ+k-2\}$ the vertices $\{in+1,…,(i+1)n\}$ form the $i$-th \emph{segment}\footnote{Denoting the segment size by $n$ instead of the number of vertices is more convenient. Note that $|V| = Θ(n)$ still holds.}. The edge set $E$ has size $cnℓ$. Each edge $e ∈ E$ is independently determined by one uniformly random variable $j ∈ J := \{0,…,ℓ-1\}$ denoting the \emph{type} of $e$ and $k$ independent random variables $o₀,…,o_{k-1}$ uniformly distributed in $[n]$, yielding $e = \{(j + t)n + o_t \mid t ∈ \{0,…,k-1\} \}$. In other words, $e$ contains one uniformly random vertex from each segment $j,j+1,…,j+k-1$. There may be repeating edges but the probability that his happens is $\O(1/n)$. The edge density $c \frac{ℓ}{ℓ+k-1}$ approaches $c$ for $ℓ \gg k$.

\subparagraph{Results.} Let the \emph{peelability threshold} for $k$-ary fuse graphs be defined as
\[ f_k := \sup\{ c ∈ ℝ^+ \mid ∀ℓ ∈ ℕ: \Pr[F(n,k,c,ℓ) \text { is peelable}] \stackrel{n→∞}\longrightarrow 1 \}. \]
Our Main Theorem relates $f_k$ to the \emph{orientability threshold} $c_k^*$ of $k$-ary Erdős-Renyi hypergraphs and the \emph{erosion threshold} $\ero_k$ defined in the technical part of our paper.

\begin{theorem}
    \label{thm:main} For any $k ≥ 3$ we have $\ero_k ≤ f_k ≤ c_k^*$.
\end{theorem}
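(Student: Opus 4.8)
\medskip
\noindent
The plan is to prove the two inequalities separately: $\ero_k \le f_k$ by transferring an idealised peeling process from an infinite model to the finite random graph, and $f_k \le c_k^*$ by reducing non-peelability to a known property of the uncoupled ensemble $H^k_{n,cn}$.

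For the lower bound, fix $c < \ero_k$ and an arbitrary $\ell$. First I would identify the local weak limit of $F(n,k,c,\ell)$ as $n\to\infty$: the neighbourhood of a random vertex converges to an explicit multi-type Galton--Watson structure in which every vertex and edge carries a \emph{segment index}, vertex offspring numbers are Poisson with mean depending only on that index (equal to $ck$ deep in the interior and decaying towards the two ends of the fuse), and the two ends act as boundary conditions. On this limit, one round of peeling acts on the vector $q\in[0,1]^{\mathbb{Z}}$ of per-segment half-edge survival probabilities by the monotone, spatially local operator $\Phi$, and the defining property of $\ero_k$ is that for $c<\ero_k$ the iteration $q^{(r+1)}=\Phi(q^{(r)})$, started from the ``all alive, both ends cut'' state, converges pointwise to $0$; that is, the two erosion fronts sweep across the whole line. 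I would then (i) check that the end regions are peeled away outright \whp, so that these fronts are actually launched in the finite graph; (ii) transfer the idealised trajectory to the finite random peeling by a Wormald-type differential-equation / bounded-differences concentration argument, which is uniform in the fixed parameter $\ell$ because only $\O(\ell)$ rounds are needed and each contributes an error tending to $0$ with $n$, while monotonicity of $\Phi$ keeps these errors from amplifying along the fuse; and (iii) rule out a sporadic surviving core of sublinear size by the standard first-moment bound over small dense sub-hypergraphs. Since the conclusion holds for every $\ell$, we get $c\le f_k$, and letting $c\uparrow\ero_k$ yields $\ero_k\le f_k$.

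For the upper bound I would use the classical fact that a peelable hypergraph is $1$-orientable: a peeling order assigns to each edge the vertex whose removal deleted it, and this map is injective, so $\Pr[F(n,k,c,\ell)\text{ peelable}]\le\Pr[F(n,k,c,\ell)\text{ orientable}]$. Hence it suffices to show that for every $c>c_k^*$ there is an $\ell$ with $F(n,k,c,\ell)$ non-orientable \whp. By Hall's condition a vertex set $W$ containing strictly more than $|W|$ edges certifies non-orientability, and a certificate inside any sub-hypergraph certifies the whole graph. I would take $\ell$ large and examine a long block of interior segments: there the fuse graph locally resembles $H^k_{N,cN}$ (degrees $\sim\Po(ck)$, locally tree-like), and the $2$-core of $H^k_{N,cN}$ has edge-to-vertex ratio strictly above $1$ precisely when $c>c_k^*$, which is exactly the defining property of the orientability threshold. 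The plan is to show that \whp the fuse graph contains such an overfull set $W$ --- essentially, that the overfull $2$-core of the bulk survives inside the finite graph --- and then to feed it into the chain above.

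The main obstacle is precisely this last step: ruling out that the spatial ``$k$ consecutive segments'' constraint on edges, or the erosion entering from the two ends of the fuse, destroys the overfull dense sub-hypergraph predicted by the mean-field analysis. Morally this is the easy half of threshold saturation --- spatial coupling may raise the \emph{algorithmic} (peeling) threshold but should never push the \emph{combinatorial} (orientability) threshold past $c_k^*$ --- but it has to be made rigorous for this ensemble. I would attempt it either by a direct concentration computation, in the spirit of the $2$-core analysis for the uncoupled model, that locates such a $W$ inside a bulk window and checks that the segment constraint perturbs only lower-order terms in the relevant exponents, or by showing that the boundary erosion penetrates only a sublinear depth in this density range, so that the $2$-core of a long bulk window is, up to a negligible trim at its ends, the overfull $2$-core of $H^k_{N,cN}$. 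On the lower-bound side the corresponding technical nuisance is the uniform-in-$\ell$ concentration transfer in step (ii), which as noted is tamed by the monotonicity of $\Phi$.
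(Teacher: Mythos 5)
Your first inequality, $\ero_k \le f_k$, follows essentially the same route as the paper: identify the local weak limit, express one round of peeling as a monotone spatially local operator on per-segment survival probabilities, use the defining property of $\ero_k$ to drive these probabilities below any $\delta$ in a bounded number of rounds, transfer to the finite graph by concentration, and exclude a sublinear surviving core by a first-moment count over small dense sub-hypergraphs. (Two cosmetic differences: the paper uses local weak convergence of constant-radius neighbourhoods plus Azuma rather than a Wormald-type ODE analysis, and it only needs one erosion front, since $\q{0}=𝟙_I \le \step_0^1$ already encodes the ``cut'' boundary on both sides.)

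The second inequality, $f_k \le c_k^*$, is where your argument has a genuine gap, and it sits exactly at the step you flag as ``the main obstacle.'' Reducing non-peelability to non-orientability is right, but you still have to prove that for $c>c_k^*$ the \emph{fuse graph} (not $H^k_{n,cn}$) is non-orientable \whp, i.e.\ that it contains an overfull witness set $W$. Neither of your two proposed routes is a routine perturbation: the known proofs that $H^k_{n,cn}$ has an overfull $2$-core precisely above $c_k^*$ are delicate first/second-moment computations whose exponents would change under the ``$k$ consecutive segments'' constraint, and the alternative claim that boundary erosion penetrates only sublinearly is not obviously available for $c$ just above $c_k^*$ (the paper's numerics indicate $\ero_k < c_k^* < \co_k$, so in this range neither erosion nor consolidation is guaranteed). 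The paper resolves this with a different and essentially self-contained device: it glues the last $k-1$ segments onto the first $k-1$ to obtain a seamless cyclic variant $\widetilde{F}$ whose local weak limit \emph{coincides} with that of $H^k_{n,cn}$, and then invokes Lelarge's theorem that the maximum-matching (equivalently, maximum partial orientation) density of a locally tree-like bipartite graph sequence depends only on its local weak limit. This imports the known deficiency $\delta c n$ of $H^k_{n,cn}$ wholesale into $\widetilde{F}$; choosing $\ell$ large enough that ungluing (which changes the maximum partial orientation by at most $(k-1)n$) cannot absorb that deficiency finishes the proof. To complete your version you would need either this matching-via-local-weak-limit theorem or an equivalently strong statement that the densest-subgraph or $2$-core excess density is determined by the local weak limit; without such a tool the claim that the overfull set ``survives inside the finite graph'' remains unproved.
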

\noindent The orientability thresholds $c_k^*$ are known exactly \cite{DGMMPR:Tight:2010,FP:Sharp:2012,FM:Maximum:2012} and we determine lower bounds on the erosion thresholds $\ero_k$. As shown in \cref{tab:upper-and-lower-bounds}, this makes it possible to narrow down $f_k$ to an interval of size $10^{-5}$ for all $k ∈ \{3,…,7\}$. 

\begin{table}
    \begin{tabular}{cccccccc}
        \toprule
        $k$ & 3 & 4 & 5 & 6 & 7\\
        \midrule
        $b_k$&0.9179352469&0.9767692112&0.9924345766&0.9973757381&0.9990561294\\
        $c_k^*$&0.9179352767&0.9767701649&0.9924383913&0.9973795528&0.9990637588\\
        $B_k$&0.9179353065&0.9767711186&0.9924422067&0.9973833675&0.9990713882\\
        \midrule
        $⇒ f_k ≈$ &0.917935&0.97677&0.99243&0.99738&0.99906\\
        \bottomrule
    \end{tabular}
    \caption{The erosion thresholds $\ero_k$ and peelability thresholds $f_k$ for $k$-ary fuse graphs satisfy $b_k ≤ \ero_k ≤ f_k ≤ c_k^*$. The values $B_k$ play a role in \cref{sec:approximating}.}
    \label{tab:upper-and-lower-bounds}
\end{table}

\subparagraph{Outline.}
The paper is organised as follows. In \cref{sec:peeling-operator} we idealise the peeling process by switching to the \emph{random weak limit} of our hypergraphs, and capture the essential behaviour of the process in terms of an operator $\hP$ acting on functions $q: ℤ → [0,1]$. For this operator, we identify the properties of being \emph{eroding} and \emph{consolidating} as well as corresponding thresholds $\ero_k$ and $\co_k$ in \cref{sec:eroding-consolidating}. We then prove the “$\ero_k ≤ f_k$” part of our theorem in \cref{sec:wrapping-up} and give numerical approximations of $\ero_k$ and $\co_k$ in \cref{sec:approximating}. The comparatively simple “$f_k ≤ c_k^*$” part of our theorem is independent of these considerations and is proved in \cref{sec:upperbound}. Finally, in \cref{sec:experiments} we demonstrate how using our hypergraphs can improve the performance of practical retrieval data structures.






\section{The Peeling Process and Idealised Peeling Operators}
\label{sec:peeling-operator}

In this section we consider how the probabilities for vertices to “survive” $r ∈ ℕ$ rounds of peeling changes from one round to the next. In the classical setting this could be described by a function, mapping the old probability to the new one \cite{Molloy05:Cores-in-random-hypergraphs}. In our case, however, there are distinct probabilities for each segment of the graph. Thus we need a corresponding operator $\hP$ that acts on \emph{sequences} of probabilities. Conveniently, it will be independent of $n$ and $ℓ$.

We almost always suppress $n,k,c,ℓ$ in notation outside of definitions, assuming $n$ to be large. Big-$\O$ notation refers to $n → ∞$ while $k,c,ℓ$ are constant.

Consider the parallel peeling process $\peel(F)$ on $F = F(n,k,c,ℓ)$. In each \emph{round} of $\peel(F)$, all vertices of degree $0$ or $1$ are determined and then deleted simultaneously. Deleting a vertex implicitly deletes incident edges. We also define the \emph{rooted peeling process} $\peel_v(F)$ for any vertex $v ∈ V$, which behaves exactly like $\peel(F)$ except that the special vertex $v$ may only be deleted if it has degree $0$, not if it has degree $1$. For any $i ∈ I$ and $r ∈ ℕ₀$ we let $\q{r}(i) = \q{r}(i,n,k,c,ℓ)$ be the probability that a vertex $v$ of segment $i$ survives $r$ rounds of $\peel_v(F)$, i.e.~is not deleted. Note that the probability is well-defined as vertices of the same segment are symmetric.

 
By definition, $\q{0}(i) = 1$ for all $i ∈ I$. Whether a vertex $v$ of segment $i ∈ I$ survives $r > 0$ rounds is a function of its $r$-neighbourhood $N(n,v,r)$, i.e.~the set of vertices and edges of $F$ that can be reached from $v$ by traversing at most $r$ hyperedges.
 
It is standard to consider the \emph{random weak limit} of $F$ to get a grip on the distribution of $N(n,v,r)$ and thus on $\q{r}(i)$. Intuitively, we identify a (possibly infinite) random tree that captures the local characteristics of $F$ for $n → ∞$. See \cite{AS:Objective_Method:2004} for a good survey with examples and details on how to formally define the underlying topology and metric space.
In the limit, the binomially distributed vertex degrees (e.g. $\Bin(cnℓ,\frac{1}{nℓ})$ for vertices of segment $0$) become Poisson distributed ($\Po(c)$ for segment $0$). Short cycles are not only rare but non-existent and certain weakly correlated random variables become perfectly independent.
 

 \begin{definition}[Limiting Tree]
    \label{def:T}
    Let $k, ℓ ∈ ℕ$, $c ∈ ℝ^+$ and $i ∈ I$. The random (possibly infinite) hypertree $T_i = T_i(k,c,ℓ)$ is distributed as follows.
    
    $T_i$ has a root vertex $\root(T_i)$ of segment\footnote{In the current context, the segment of a vertex is an abstract label. There can be an unbounded number of vertices of each segment.} $i$ which, for each $j ∈ \{i-k+1,…,i\} ∩ J$, has $d_j \sim \Po(c)$ \emph{child edges} of type $j$. Each child edge of type $j$ is incident to $k-1$ (fresh) \emph{child vertices} of its own, one for each segment $i' ∈ \{j,…,j+k-1\} \setminus \{i\}$. The sub-hypertree at such a child vertex of segment $i'$ is distributed recursively (and independently of its sibling-subtrees) according to $T_{i'}$.
 \end{definition}
\noindent Since all arguments are standard in contexts where local weak convergence plays a role, we state the following lemma without proof. For instance, a full argument to show a similar convergence is given in \cite{Leconte:Cuckoo:2013}. See also \cite{K:Poisson:2006} for the related technique of Poissonisation.
 \begin{lemma}
    \label{lem:localConvergenceOfH}
    Let $r ∈ ℕ$ be constant. Let further $N(n,v,r)$ be the $r$-neighbourhood of a vertex $v$ of segment $i$ in $F$ and $T_i^{(r)}$ the $r$-neighbourhood of $\root(T_i)$, both viewed as undirected and unlabelled hypergraphs. Then $N(n,v,r)$ converges in distribution to $T_i^{(r)}$ as $n → ∞$.
 \end{lemma}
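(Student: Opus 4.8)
The plan is to explore $N(n,v,r)$ by breadth-first search and couple it, level by level, with the recursive construction of $T_i$ in \cref{def:T}, so that the two hypergraphs agree outside an event of probability $\O(1/n)$. Both $N(n,v,r)$ and $T_i^{(r)}$ take values in the countable set of isomorphism classes of finite hypergraphs, and $T_i^{(r)}$ is finite almost surely: its size is stochastically dominated by a depth-$r$ Galton--Watson tree whose offspring count is a sum of at most $k$ independent $\Po(c)$ variables, each spawning $k-1$ further vertices, and hence has all moments. So convergence in distribution is equivalent to $\Pr[N(n,v,r) \cong H] \to \Pr[T_i^{(r)} \cong H]$ for every fixed finite hypergraph $H$, which will follow once the exploration is shown to reproduce the correct branching structure with failure probability $o(1)$.

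Consider one exploration step. Suppose BFS has reached a vertex $w$ of segment $i'$ at depth less than $r$. The edges of $F$ incident to $w$ are exactly the edges of type $j \in \{i'-k+1,\dots,i'\} \cap J$ that picked $w$ as their representative of segment $i'$. Since the $cn\ell$ edges are independent and a given edge is of type $j$ and incident to $w$ with probability $\frac{1}{n\ell}$ — disjoint events for distinct admissible $j$ — the vector of type-indexed degrees of $w$ is, up to lower-order terms, multinomial with $cn\ell$ trials and converges jointly, as $n \to \infty$, to independent $\Po(c)$ random variables indexed by the admissible types, matching \cref{def:T} (including at boundary segments, where the index set $\{i'-k+1,\dots,i'\} \cap J$ shrinks identically in the two models). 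Conditioning on the bounded part of $F$ already revealed only removes $\O(1)$ edges from the otherwise i.i.d.\ pool, which does not affect the limit. Each newly revealed edge of type $j$ carries $k-1$ fresh vertices, one uniform in segment $i''$ for each $i'' \in \{j,\dots,j+k-1\} \setminus \{i'\}$, and recursing on these reproduces the laws of the relevant $T_{i''}$.

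The coupling can only break at a collision: a freshly sampled vertex coinciding with an already revealed one, or two revealed edges meeting in more than one vertex (creating a cycle). This is the only genuine content of the argument, and I would bound it by a union bound. Conditioned on no collision so far, the explored structure is a hypertree of the kind just described, so after $r$ rounds it has $\O(1)$ vertices in expectation and, by Markov's inequality, at most $\log n$ vertices with probability $1-o(1)$; since each vertex coordinate is uniform in a segment of size $n$, the probability that some pair among the $\O(\log^2 n)$ relevant pairs collides is $\O(\log^2 n / n) = o(1)$. On the complementary event the exploration yields precisely an unlabelled hypertree whose law, by the preceding paragraph and induction on $r$, converges to that of $T_i^{(r)}$. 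This gives $\Pr[N(n,v,r) \cong H] \to \Pr[T_i^{(r)} \cong H]$ for each finite $H$, i.e.\ the asserted convergence in distribution. The main obstacle is pure bookkeeping — making the statement that conditioning on the revealed history perturbs the relevant probabilities by only $\O(1/n)$ uniform over the random exploration history — which is precisely why one defers to the standard treatments cited after the lemma.
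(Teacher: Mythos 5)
The paper states this lemma without proof, deferring to standard references on local weak convergence, and your sketch is exactly the standard argument those references contain: breadth-first exploration, convergence of the type-indexed degree vector of each explored vertex to independent $\Po(c)$ variables (with the admissible type set shrinking identically at boundary segments), and an $\O(\log^2 n/n)$ collision bound to rule out cycles and repeated vertices. The argument is correct and matches what the paper implicitly relies on; there is nothing to compare against beyond that.
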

 \noindent We now direct our attention to survival probabilities in the idealised peeling processes $(\peel_{\root(T_i)}(T_i))_{i ∈ I}$, which are easier to analyse than those of $\peel_v(F)$.
 
 \begin{lemma}
    \label{lem:peelingT}
    Let $r ∈ ℕ₀$ be constant and $\q{r}_T(i) = \q{r}_T(i,k,c,ℓ)$ be the probability that $\root(T_i)$ survives $r$ rounds of $\peel_{\root(T_i)}(T_i)$ for $i ∈ I$. Then
    \[ \q{r+1}_T(i) = 1 - \expb{-c\sum_{j ∈ \{i-k+1,…,i\}∩J}\ \ \prod_{\substack{j ≤ i' < j+k\\i' ≠ i}} \q{r}_T(i')} \qquad\text{for } i ∈ I.\]
 \end{lemma}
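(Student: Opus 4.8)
The plan is to argue by induction on $r$, exploiting the recursive tree structure of $T_i$. The root $v=\root(T_i)$ survives $r+1$ rounds of $\peel_v(T_i)$ precisely when it still has degree at least $1$ at the start of round $r+1$; since edges are only ever deleted, $\deg(v)$ is non‑increasing, so this is equivalent to: at least one child edge of $v$ is still alive after $r$ rounds. Hence $\q{r+1}_T(i) = 1 - \Pr[\text{all child edges of }v\text{ die within }r\text{ rounds}]$, and everything reduces to understanding when a single child edge $e$, say of type $j$, survives $r$ rounds while $v$ is held fixed.

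The heart of the argument is a coupling lemma: if $w$ is one of the child vertices of $e$, of segment $i'$, and we root the subtree $T_{i'}$ at $w$, then \emph{as long as $e$ is alive}, the configuration of vertices and edges inside $T_{i'}$ evolves identically under $\peel_v(T_i)$ and under $\peel_w(T_{i'})$. I would prove this by induction on the round number. The only vertex whose deletion rule could a priori differ between the two processes is $w$ itself — frozen root (deleted only at degree $0$) in $\peel_w(T_{i'})$ versus ordinary vertex (deleted at degree $\le 1$) in $\peel_v(T_i)$ — but while $e$ is alive it contributes exactly $1$ to $\deg_{T_i}(w)$, so ``degree $\le 1$ in $T_i$'' and ``degree $0$ in $T_{i'}$'' coincide; every other vertex of $T_{i'}$ has all its incident edges inside $T_{i'}$, so its fate is the same in both processes. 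A second, short induction on $r$ then gives the clean statement: $e$ is still alive at the start of round $r+1$ in $\peel_v(T_i)$ if and only if every child vertex $w$ of $e$ survives $r$ rounds of $\peel_w(T_{i'})$.

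Granting this, the child vertices of $e$ lie in segments $\{j,\dots,j+k-1\}\setminus\{i\}$ and their subtrees are independent copies of the respective $T_{i'}$, so $e$ survives $r$ rounds (with $v$ fixed) with probability $p_j := \prod_{j\le i'<j+k,\ i'\ne i}\q{r}_T(i')$, independently across the child edges of $v$. Now apply Poisson thinning and superposition: for each admissible type $j$ the root has $\Po(c)$ child edges of type $j$, independently, each surviving $r$ rounds with probability $p_j$, so the number of surviving type‑$j$ child edges is $\Po(c\,p_j)$, these counts are independent over $j$, and the total number of surviving child edges of $v$ is $\Po\!\big(c\sum_j p_j\big)$. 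Therefore $\q{r+1}_T(i)$ is the probability this variable is nonzero, namely $1-\expb{-c\sum_{j\in\{i-k+1,\dots,i\}\cap J} p_j}$, which is the claimed formula; for $r=0$ this reduces to the consistency check $\q{0}_T\equiv 1$.

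The step I expect to require the most care is the coupling lemma, in particular the bookkeeping around the conditioning event ``$e$ is alive'': the induction must be phrased so that the claim about a given subtree $T_{i'}$ is asserted only on the event that $e$ has not yet died, and one must then verify \emph{separately} that survival of $e$ itself is equivalent to simultaneous survival of all of its child vertices in their own rooted subtree processes — this is what breaks the apparent circularity (each child subtree seemingly depends on $e$, which in turn depends on all child subtrees). Once the induction is set up correctly this is routine, and the remaining probabilistic ingredients (independence of the subtrees hanging off $T_i$, Poisson thinning, Poisson superposition) are entirely standard.
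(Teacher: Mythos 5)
Your proposal is correct and follows essentially the same route as the paper: reduce survival of the root to survival of at least one child edge, relate survival of a child edge of type $j$ to the independent survival of its $k-1$ child vertices in their own rooted subtree processes $\peel_{w}(T_{i'})$ (the coupling you spell out is exactly what the paper uses, just stated more informally there), and finish with Poisson thinning and superposition. The extra care you devote to the coupling lemma is a faithful elaboration of the paper's argument rather than a different approach.
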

 
 \begin{proof}
    Let $i ∈ I$ and $v = \root(T_i)$. Assume $j ∈ \{i-k+1,…,i\} ∩ J$ is the type of some edge $e$ incident to $v$. Edge $e$ survives $r$ rounds of $\peel_v(T_i)$ if and only if all of its incident vertices survive these $r$ rounds. Since $v$ itself may not be deleted by $\peel_v(T_i)$ as long as $e$ exists, the relevant vertices are the $k-1$ child vertices, one for each segment $i' ∈ \{j,…,j+k-1\}-\{i\}$. Call these $w₁,…,w_{k-1}$ and denote the subtrees rooted at those vertices by $W₁,…,W_{k-1}$. Now consider the peeling processes $\peel_{w₁}(W₁), …, \peel_{w_{k-1}}(W_{k-1})$. Assume one of them, say $\peel_{w_s}(W_s)$, deletes $w_s$ in round $r' ≤ r$, meaning $w_s$ has degree $0$ before round $r'$. It follows that $w_s$ has degree at most $1$ before round $r'$ in $\peel_v(T_i)$, meaning $\peel_v(T_i)$ deletes $e$ in round $r'$ (or earlier). Conversely, if none of $\peel_{w₁}(W₁),…,\peel_{w_{k-1}}(W_{k-1})$ delete their root vertex within $r$ rounds, then $w₁,…,w_{k-1}$ have degree at least $2$ after round $r$ of $\peel_{v}(T_i)$ and $e$ survives round $r$ of $\peel_v(T_i)$. This makes the probability for $e$ to survive $r$ rounds of $\peel_v(T_i)$ equal to $p_{ij} := \prod_{j ≤ i' < j+k, i' ≠ i} \q{r}_T(i')$. Since the number $m_{ij}$ of edges of type $j$ incident to $v$ has distribution $m_{ij} \sim \Po(c)$, the number $m_{ij}'$ of edges of type $j$ incident to $v$ surviving $r$ rounds of $\peel_v(T_i)$ is a correspondingly \emph{thinned out} variable, namely $m_{ij}' \sim \Bin(m_{ij}, p_{ij})$, which means $m_{ij}' \sim \Po(cp_{ij})$.
    
    The claim now follows by observing that $v$ survives $r+1$ rounds of $\peel_v(T_i)$ if and only if at least one of its child edges survives $r$ rounds of $\peel_v(T_i)$:
    \begin{align*}
        \q{r+1}_T(i) &= \Pr[v \text{ survives $r+1$ rounds of $\peel_v(T_i)$}]
        = 1 - \Pr\big[\bigcap_{j ∈ \{i-k+1,…,i\}∩J}\! \{ m_{ij}' = 0\}\big]\\
        &= 1 - \ \ \ \prod_{\mathclap{j ∈ \{i-k+1,…,i\}∩J}}\ \ \Pr[m_{ij}' = 0]
        = 1 - \ \ \ \prod_{\mathclap{j ∈ \{i-k+1,…,i\}∩J}}\ \ \  \exp{-cp_{ij}}
        = 1 - \exp{-c\ \ \ \sum_{\mathclap{j ∈ \{i-k+1,…,i\}∩J}}\ \ \  p_{ij}}.
    \end{align*}
    Replacing $p_{ij}$ with its definition completes the proof.
 \end{proof}

For convenience we define, for $k ≥ 3, ℓ ∈ ℕ$ and $c ∈ ℝ^+$, the operator ${\P} ={\P}(k,c,ℓ)$, which maps any $q : I → [0,1]$ to $\P q : I → [0,1]$ with
 \[ (\P q)(i) = 1 - \expb{-c \sum_{j ∈ \{i-k+1,…,i\}∩J}\ \ \prod_{\substack{j ≤ i' < j+k\\i' ≠ i}} q(i')} \quad \text{for } i ∈ I.\]
 Together Lemmas \ref{lem:localConvergenceOfH} and \ref{lem:peelingT} imply that $\P$ can be used to approximate survival probabilities.
 \def\err{\mathrm{err}}
 \begin{corollary}
    \label{cor:approxOfq}
    Let $r ∈ ℕ₀$ be constant. Then for all $i ∈ I$
    \[ \P^r \q{0}(i) \stackrel{\textup{def}}= \P^r \q{0}_T(i) \stackrel{\textup{Lem\,\ref{lem:peelingT}}}= \q{r}_T(i) \stackrel{\textup{Lem\,\ref{lem:localConvergenceOfH}}}= \q{r}(i)\pm o(1). \]
 \end{corollary}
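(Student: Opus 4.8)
The statement is a chain of three equalities, which I would verify from left to right. The first (marked ``def'') holds because $\q{0}$ and $\q{0}_T$ are literally the same function on $I$: $\q{0}(i)=1$ for every $i\in I$ by definition of the rooted peeling process, and $\q{0}_T(i)=1$ because $\root(T_i)$ vacuously survives $0$ rounds. Applying $\P^r$ to this common function $q\equiv 1$ then gives $\P^r\q{0}=\P^r\q{0}_T$ for free.

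For the second equality, $\P^r\q{0}_T=\q{r}_T$, I would induct on $r$; the base case $r=0$ is immediate. For the step, assuming $\P^r\q{0}_T=\q{r}_T$, we get
\[ \P^{r+1}\q{0}_T=\P\bigl(\P^r\q{0}_T\bigr)=\P\,\q{r}_T=\q{r+1}_T, \]
where the last equality is exactly \cref{lem:peelingT}: the recurrence proved there expresses $\q{r+1}_T(i)$ in terms of the values $\q{r}_T(i')$ by precisely the formula defining the operator $\P$, term for term. So this step is purely a matter of matching notation.

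The third equality, $\q{r}_T(i)=\q{r}(i)\pm o(1)$, is the only part with genuine content, and I would prove it by the standard ``bounded continuous functional meets local weak convergence'' argument. First I would show that, for $r$ constant, whether a vertex $v$ of segment $i$ survives $r$ rounds of $\peel_v(F)$ is a \emph{deterministic} function $g_r$ of the rooted $r$-neighbourhood $N(n,v,r)$ (more precisely, of its isomorphism type), and that the very same functional $g_r$ computes $\q{r}_T(i)$ out of $T_i^{(r)}$. This locality follows by a short induction on $r$ from the peeling dynamics: the status of a vertex at the end of round $m$ is determined by the end-of-round-$(m-1)$ statuses of its incident edges, each of which is determined by the end-of-round-$(m-1)$ statuses of the other vertices it contains --- with the root $v$ frozen unless its degree is $0$ --- so unwinding $r$ rounds never inspects anything beyond distance $r$ from $v$; note $g_r$ depends on the root marking but not on segment labels, consistently with \cref{lem:localConvergenceOfH} treating neighbourhoods as unlabelled. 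Now $g_r$ takes values in $\{0,1\}$, hence is bounded, and depends only on the radius-$r$ ball, hence is locally constant and therefore continuous in the usual local topology on rooted hypergraphs; moreover $T_i^{(r)}$ is almost surely finite, having depth $r$ with only finitely many child edges at each vertex almost surely. By \cref{lem:localConvergenceOfH}, $N(n,v,r)\stackrel{\mathrm d}{\longrightarrow}T_i^{(r)}$, hence
\[ \q{r}(i)=\E\bigl[g_r\bigl(N(n,v,r)\bigr)\bigr]\;\xrightarrow{\,n\to\infty\,}\;\E\bigl[g_r\bigl(T_i^{(r)}\bigr)\bigr]=\q{r}_T(i), \]
which is the asserted $\q{r}(i)=\q{r}_T(i)\pm o(1)$.

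The main (and essentially only) obstacle is the bookkeeping in this locality claim: one must check that the \emph{rooted} variant of peeling --- where only $v$ enjoys the degree-$0$ protection --- is still a function of the radius-$r$ ball, and that the very same function arises on the tree side. This is in effect the argument already carried out in the proof of \cref{lem:peelingT}, now reread as a statement about finite neighbourhoods rather than about the entire limiting tree. All the probabilistically delicate points --- vanishing of short cycles, decoupling of weakly correlated variables, Poissonisation of degrees --- are already packaged inside \cref{lem:localConvergenceOfH}, which the paper imports as standard.
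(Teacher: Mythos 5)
Your proof is correct and follows exactly the route the paper intends: the corollary is stated without a separate proof, with the inline annotations pointing to \cref{lem:peelingT} (your induction matching the recurrence to the definition of $\P$) and \cref{lem:localConvergenceOfH} (your bounded, locally determined functional argument for the $\pm o(1)$ step). Your unpacking of the third equality, including the observation that the survival indicator depends only on the rooted isomorphism type of the radius-$r$ ball, is precisely the standard argument the paper is implicitly invoking.
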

\noindent To obtain \emph{upper} bounds on survival probabilities, we may remove the awkward restriction “$∩\,J$” in the definition of $\P$. We define $\hP = \hP(k,c)$ as mapping $q : ℤ → [0,1]$ to $\hP q : ℤ → [0,1]$ with
\[ (\hP q)(i) = 1 - \expb{-c\sum_{j = i-k+1}^i\ \ \prod_{\substack{j ≤ i' < j+k\\i' ≠ i}} q(i')} \quad \text{for } i ∈ ℤ.\]
Note that $\hP$ does not depend on $ℓ$ or $n$. To simplify notation, we assume that the old operator $\P$ also acts on functions $q : ℤ → [0,1]$, ignoring $q(i)$ for $i ∉ I$, and producing $\P q : ℤ → [0,1]$ with $\P q(i) = 0$ for $i ∉ I$. We also extend $\q{0}$ to be $𝟙_{I} : ℤ → [0,1]$, i.e. the characteristic function on $I$, essentially introducing vertices of segments $i ∉ I$ which are, however, already deleted with probability $1$ before the first round begins.
Note that while $\q{r}(i)$ and $\q{r}_T(i)$ are by definition non-increasing in $r$, this is not the case for $(\hP^r \q{0})(i)$. For instance, $\hP^{r} \q{0}$ has support $\{0-r,…,ℓ+k-2+r\}$, which grows with $r$.\footnote{It is still possible to interpret $\hP^r \q{0}(i)$ as survival probabilities in more symmetric extended versions $\hat{T}_i$ of the tree $T_i$, but we will not pursue this.}
The following lemma lists a few easily verified properties of $\hP$. All inequalities between functions should be interpreted point-wise.
\begin{lemma}
    \label{lem:properties-hP}
    \begin{enumerate}[(i)]
            • $∀q: ℤ → [0,1]\colon \P q ≤ \hP q$.
            • $\hP$ commutes with the shift operators $\sL$ and $\sR$ defined via $(\sL q)(i) = q(i+1)$ and $(\sR q)(i) = q(i-1)$. In other words, we have $∀q: ℤ → [0,1] \colon \hP (\sL q) = \sL (\hP q)  ∧ \hP (\sR q) = \sR (\hP q)$.
            
            • $\hP$ is monotonic, i.e.~$∀q,q': ℤ → [0,1] \colon q ≤ q' ⇒ \hP q ≤ \hP q'$.
            • $\hP$ respects monotonicity, i.e.~if $q(i)$ is (strictly) increasing in $i$, then so is $(\hP q)(i)$.
    \end{enumerate}
\end{lemma}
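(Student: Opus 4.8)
All four claims are routine, so the plan is to order them so that the later ones reuse the earlier ones rather than reopening the definition of $\hP$ each time. I would start with (ii), which is pure reindexing: the right-hand side of $\hP(\sL q)(i)$ is $1 - \exp{-c\sum_{j=i-k+1}^{i}\ \prod_{j ≤ i' < j+k,\ i' ≠ i} q(i'+1)}$, and the substitution $j ↦ j+1$, $i' ↦ i'+1$ turns this into the expression for $\hP q(i+1) = (\sL \hP q)(i)$; commutation with $\sR$ is identical (or follows at once since $\sL$ and $\sR$ are mutually inverse on $[0,1]^ℤ$). For (iii) I would invoke only two elementary facts: $x ↦ 1-\e^{-cx}$ is non-decreasing on $[0,∞)$ when $c > 0$, and a product of finitely many numbers in $[0,1]$ is non-decreasing in each factor. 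Hence if $q ≤ q'$ pointwise, then each inner product, then each summand, then the quantity $\sum_j\prod_{\cdots} q(i')$ inside $\exp$, then $\hP q(i)$ itself grows, pointwise in $i$.

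Claim (i) is then a counting remark. For $i ∉ I$ we have $\P q(i) = 0 ≤ \hP q(i)$, the latter because the argument of $\exp$ is $≤ 0$. For $i ∈ I$, the sum defining $\hP q(i)$ runs over all $j ∈ \{i-k+1,…,i\}$, while the sum defining $\P q(i)$ runs only over $j ∈ \{i-k+1,…,i\} ∩ J$; every extra summand is a product of numbers in $[0,1]$, hence non-negative, so the quantity inside $\exp$ is at least as large for $\hP$ as for $\P$, and monotonicity of $x ↦ 1-\e^{-cx}$ gives $\P q(i) ≤ \hP q(i)$. It is worth noting in passing that for $j ∈ J$ all of $j,…,j+k-1$ already lie in $I$, so the ``$∩J$'' in the definition of $\P$ never references $q$ outside $I$; this is what keeps the comparison clean.

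Finally (iv) falls out of (ii) and (iii). If $q$ is non-decreasing then $q ≤ \sL q$ pointwise, so $\hP q ≤ \hP(\sL q) = \sL(\hP q)$ by (iii) and then (ii); reading off coordinate $i$ gives $(\hP q)(i) ≤ (\hP q)(i+1)$ for all $i$, i.e.\ $\hP q$ is non-decreasing. For the strict case I would first observe that a strictly increasing $q : ℤ → [0,1]$ is everywhere strictly positive (otherwise its value one step to the left would be negative), so replacing $q$ by $\sL q$ replaces every factor $q(i')$ in every inner product of $\hP q(i)$ by the strictly larger positive number $q(i'+1)$; hence each product strictly increases, the quantity inside $\exp$ strictly increases, and $(\hP q)(i) < (\hP q)(i+1)$. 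I do not expect a genuine obstacle, since the lemma is flagged as easily verified; the only points that need a moment of care are this positivity observation in the strict half of (iv) and the index-range check mentioned under (i).
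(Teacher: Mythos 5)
Your verification is correct on all four points. The paper itself offers no proof of this lemma (it is introduced only as listing "easily verified properties"), so there is nothing to compare against; your write-up supplies exactly the routine checks that were omitted, and deriving (iv) from (ii) and (iii) via $q \le \sL q$ — together with the positivity observation needed for the strict case — is a clean way to organise it.
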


\section{Two Fixed Points Battling for Territory}
\label{sec:eroding-consolidating}

\def\const{\mathrm{const}}
\def\conv#1{\stackrel{#1}{\longrightarrow}}

In this section we define the \emph{erosion} and \emph{consolidation thresholds} at which the behaviour of $\hP$ changes in crucial ways.

First, we require a few facts about the function $f: [0,1] → [0,1]$ mapping $x ↦ 1-e^{-ckx^{k-1}}$. It appears in the analysis of cores in $k$-ary Erdős-Renyi hypergraphs $H_{n,cn}^k$, essentially mapping the probability $ρ_r$ for a vertex to survive $r$ rounds of peeling to the probability $ρ_{r+1} = f(ρ_r)$ to survive $r+1$ rounds of peeling, see \cite[page 5]{Molloy05:Cores-in-random-hypergraphs}\footnote{Our setting corresponds to the choices $(r_{\textrm{Molloy}},k_{\textrm{Molloy}},c_{\textrm{Molloy}}) = (k,2,c·(k-1)!)$}.

The threshold $c_k$ for the appearance of a core in $H_{n,cn}^k$ turns out to be the threshold for the appearance of a non-zero fixed point of $f$. The following is implicit in the analysis.
%
\begin{fact}[{\cite[Proofs of Lemmas 3 and 4]{Molloy05:Cores-in-random-hypergraphs}}]\ 
    \begin{enumerate}[(i)]
        • For $c < c_k$, \,$f$ has only the fixed point $f(0) = 0$, with $f'(0) < 1$.
        • For $c > c_k$, there are exactly three fixed points $0$, $ξ₁ = ξ₁(k,c)$ and $ξ₂ = ξ₂(k,c)$ where $f'(ξ₁) > 1$ while $f'(0),f'(ξ₂) < 1$.
    \end{enumerate}
\end{fact}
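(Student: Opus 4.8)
The plan is to reduce the whole statement to elementary calculus for one real function of one variable. First I would dispose of the trivial parts: $f(0)=0$ always; $f$ maps $[0,1]$ into $[0,1)$, so $1$ is never a fixed point; and $f'(x)=ck(k-1)x^{k-2}e^{-ckx^{k-1}}$, so $f'(0)=0<1$ because $k\ge 3$ makes $x^{k-2}\to 0$. This already settles the ``$f'(0)<1$'' assertions in both (i) and (ii), so only the fixed points in $(0,1)$ remain. For $x\in(0,1)$ the fixed-point equation $1-e^{-ckx^{k-1}}=x$ rewrites, using $e^{-ckx^{k-1}}=1-x>0$ and taking logarithms, as $c=h(x)$ where $h(x):=\dfrac{-\ln(1-x)}{k\,x^{k-1}}$. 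So I want to understand the level sets of $h$ on $(0,1)$.

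Next I would analyse $h$. One checks directly that $h(x)\to\infty$ both as $x\to 0^+$ (since $-\ln(1-x)\sim x$ and $k-2\ge 1$) and as $x\to 1^-$ (numerator diverges, denominator bounded away from $0$). The logarithmic derivative gives $x\cdot h'(x)/h(x)=\psi(x)-(k-1)$ with $\psi(x):=\dfrac{x}{(1-x)\,(-\ln(1-x))}$, so the sign of $h'$ equals the sign of $\psi(x)-(k-1)$. Substituting $t=1-x$ turns $\psi$ into $\phi(t)=\dfrac{1-t}{-t\ln t}$, and a short quotient-rule computation gives $\phi'(t)=\dfrac{\ln t+1-t}{t^2\ln^2 t}$. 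Since $\ln t<t-1$ for $t\in(0,1)$ (the standard inequality, one differentiation), $\phi$ is strictly decreasing, hence $\psi$ is strictly increasing on $(0,1)$, running from $\lim_{x\to 0^+}\psi(x)=1$ up to $\lim_{x\to 1^-}\psi(x)=\infty$. As $k-1\ge 2$ lies in $(1,\infty)$, there is a unique $x^{*}\in(0,1)$ with $\psi(x^{*})=k-1$; hence $h$ strictly decreases on $(0,x^{*}]$, strictly increases on $[x^{*},1)$, and has a unique global minimum. I would take this minimum value as the definition of $c_k$; that it coincides with Molloy's $2$-core threshold is exactly the content imported from \cite{Molloy05:Cores-in-random-hypergraphs}.

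Finally I would read off the conclusions. For $c<c_k$ we have $c<h(x)$ on all of $(0,1)$, so $0$ is the only fixed point, which is (i). For $c>c_k$, unimodality of $h$ together with $h\to\infty$ at both endpoints forces $c=h(x)$ to have exactly one solution $\xi_1\in(0,x^{*})$ and exactly one $\xi_2\in(x^{*},1)$, with $h'(\xi_1)<0<h'(\xi_2)$; adding $0$ gives exactly three fixed points. For the derivative conditions, at any fixed point $\xi$ I substitute $e^{-ck\xi^{k-1}}=1-\xi$ and $c=h(\xi)$ into $f'$ and obtain the clean identity $f'(\xi)=\dfrac{k-1}{\psi(\xi)}$; therefore $f'(\xi)>1$ if and only if $\psi(\xi)<k-1$ if and only if $h'(\xi)<0$, and the reverse inequalities are equivalent likewise. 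Thus $f'(\xi_1)>1$ and $f'(\xi_2)<1$, completing (ii). The one genuinely non-formulaic step is the unimodality of $h$, i.e.\ the monotonicity of $\psi$, but this collapses to $\ln t<t-1$; a secondary point needing care is that the endpoint limits $h\to\infty$ are what guarantee that for $c>c_k$ there are \emph{exactly} two nonzero fixed points and not fewer.
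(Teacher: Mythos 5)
Your derivation is correct and self-contained, whereas the paper deliberately gives no proof at all: the statement is imported as a \emph{Fact} from Molloy's analysis (his proofs of Lemmas 3 and 4), and the only content the paper itself adds is the remark that the core threshold $c_k$ coincides with the threshold for a non-zero fixed point of $f$. What you have done is reconstruct the analytic half of that cited material from scratch: the reduction of the fixed-point equation to the level sets of $h(x)=\frac{-\ln(1-x)}{kx^{k-1}}$, the unimodality of $h$ via the monotonicity of $\psi$ (which correctly collapses to $\ln t< t-1$), the endpoint blow-ups of $h$ that force \emph{exactly} two non-zero fixed points for $c>c_k$, and the clean identity $f'(\xi)=(k-1)/\psi(\xi)$ at fixed points, which ties the sign of $f'(\xi)-1$ to the sign of $h'(\xi)$. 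I checked the computations ($\phi'(t)=\frac{\ln t+1-t}{t^{2}\ln^{2}t}$, the limits $\psi\to 1$ and $\psi\to\infty$, and the derivative identity) and they are all right; for $k=3$ your characterisation $c_3=\min h\approx 0.818$ matches the known value. The one point you correctly flag as not provable by calculus alone is that $\min h$ equals Molloy's combinatorial core-appearance threshold, and deferring exactly that to the citation is the appropriate division of labour. In short: the paper buys the whole statement by reference; you buy only the combinatorial identification by reference and prove the analytic structure directly, which makes the Fact verifiable without consulting the source.
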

This implies the following behaviour of applying $f$ repeatedly to a starting value $x$. This should be immediately clear from the sketches we give on the right.

\parbox{0.96\textwidth}{\small\begin{equation}
    \ \hspace{-1cm} f^r(x) \conv{r → ∞} \begin{cases}
        0 &\hspace{-0.3cm}\text{ if $c < c_k ∧ x ∈ [0,1]$,}\\
        0 &\hspace{-0.3cm} \text{ if $c > c_k ∧ x ∈ [0,ξ₁)$,}\\
        ξ₂ &\hspace{-0.3cm}\text{ if $c > c_k ∧ x ∈ (ξ₁,1]$.}
    \end{cases}\raisebox{-1.7cm}{\includegraphics[page=1,scale=0.95]{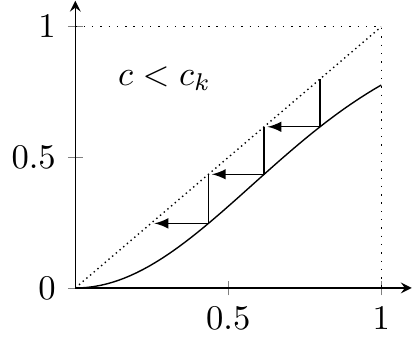}
    \includegraphics[page=2,scale=0.95]{figure-f-iteration/figure.pdf}}
    \label{eq:iterating-f}
\end{equation}}
%
%

Note that $f$ captures the behaviour of $\hP$ on constant functions $\const_x(i) := x$, in the sense that $\hP \const_x = \const_{f(x)}$. For $c < c_k$ we therefore have for all $i ∈ I$
\[  \P^r \q{0}(i) \stackrel{\text{Cor \ref{cor:approxOfq}}}= \q{r}(i)\pm o(1) \text{ and } \P^r \q{0} ≤ \hP^r \q{0} ≤ \hP^r \const_1 = \const_{f^r(1)} \conv{r → ∞} \const_0. \]
In conjunction with a later lemma, this is sufficient to show that $F$ is peelable \whp in this case. A similar argument for $c = c_k$ is possible as well. Our focus from now on is therefore on the interesting case $c > c_k$ where the three distinct fixed points $0$, $ξ₁$, $ξ₂$ of $f$ exist.

We give an intuitive account of the phenomenon underlying the following steps before continuing formally. Due to \eqref{eq:iterating-f} we have
\[ \hP^r \const_x \conv{r → ∞} \begin{cases}
    \const₀ \quad \text{for $x < ξ₁$}\\
    \const_{ξ₂} \quad \text{for $x > ξ₁$.}\\
\end{cases}\]
\def\step{\mathrm{step}}
\def\ibord{i_{\mathrm{bord}}}
Now consider what happens if we iterate $\hP$ on a function that is “torn” between these two cases. Concretely, let us consider the function $\step₀¹$ where we define $\step_x^y : ℤ → [0,1]$ to have value $y$ on $ℕ₀$ and value $x$ on negative inputs.
Should we expect $\hP^r \step₀¹$ to converge to $\const₀$ or $\const_{ξ₂}$ as $r$ increases? It turns out both is possible, depending on $c$.

Speaking more generally, let $q: ℤ → [0,1]$ be any function. If $N(i) := \{i-k+1,…,i+k-1\} \setminus \{i\}$ then $\hP q(i)$ depends (monotonically) on $(q(i'))_{i' ∈ N(i)}$. It is clear that if $q(i') < ξ₁$ for all $i' ∈ N(i)$, then $\hP q(i) < ξ₁$ as well. Similarly, if $q(i') > ξ₁$ for all $i' ∈ N(i)$ then $\hP q(i) > ξ₁$. If, however, there are indices $i'₁,i'₂ ∈ N(i)$ with $q(i'₁) < ξ₁ < q(i'₂)$ then $\hP q(i)$ could be above or below $ξ₁$; in this case we call the index $i$ \emph{contested}. 

The contested area of $\step₀¹$ is $[-k+1,k-2]$. Iterating $\hP$ we obtain $\hP^r \step₀¹$ for $r ∈ ℕ₀$. For all $r ∈ ℕ₀$ the contested area is an interval of size $2k-2$ with all values to the left of it (towards $-∞$) less than $ξ₁$ and all values to the right of it (towards $∞$) bigger than $ξ₁$. However, the contested area may \emph{shift}. If the domain of values bigger than $ξ₁$ is shrinking (\emph{“eroding”}), then we see convergence to $\const₀$. If conversely it is growing (“consolidating”), then we see convergence to $\const_{ξ₂}$. In \cref{fig:movingFronts} we visualise these effects. There is only a small range of values $c$ where both fixed points seem equally “strong” and the same area remains perpetually contested.



\begin{figure}[htbp]
    {\includegraphics[page=1,scale=0.9]{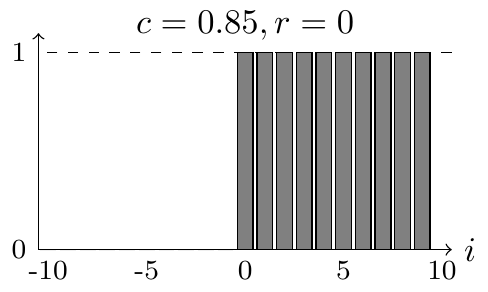}~
    \includegraphics[page=2,scale=0.9]{figure-for-intuition/figure.pdf}~
    \includegraphics[page=3,scale=0.9]{figure-for-intuition/figure.pdf}}\\
    {\includegraphics[page=4,scale=0.9]{figure-for-intuition/figure.pdf}~
    \includegraphics[page=5,scale=0.9]{figure-for-intuition/figure.pdf}~
    \includegraphics[page=6,scale=0.9]{figure-for-intuition/figure.pdf}}
    
    
    \caption{Depiction of $\hP^r \step₀¹$ for $c ∈ \{0.85,0.95\}$ and $r ∈ \{0,5,25\}$ on the range $i ∈ \{-10,…,10\}$. The phenomenon of \emph{erosion} can be seen on the top with the plot seemingly moving towards the right between $r = 5$ and $r = 25$. Similarly, \emph{consolidation} can be seen on the bottom.}
    \label{fig:movingFronts}
\end{figure}

With this in mind, we make the following definitions. For a compact formulation in the coarse terms of shifts (“$\sL$”, “$\sR$”) and point-wise inequalities (“$<$”, “$>$”) we use slightly different step functions.

\begin{definition}
    Let $k ≥ 3$, $c ∈ ℝ^+$ and $\hP = \hP(k,c)$ as above. We say
    \begin{align*}
        \text{$\hP$ is eroding if } &∃ R ∈ ℕ\colon \hP^R \step_{ξ₁/2}^1 < \sR \step_{ξ₁/2}^1\\
        \text{and $\hP$ is consolidating if } & ∃ R ∈ ℕ\colon \hP^R \step_{0}^{(ξ₁+ξ₂)/2} > \sL \step_{0}^{(ξ₁+ξ₂)/2}.
    \end{align*}
\end{definition}
\noindent We define the corresponding \emph{erosion} and \emph{consolidation thresholds} as
\[
    \ero_k = \sup\{c ∈ ℝ^+ \mid \hP(k,c) \text{ is eroding} \},\qquad
    \co_k = \inf\{c ∈ ℝ^+ \mid \hP(k,c) \text{ is consolidating} \}.
\]
Note that $c < \ero_k$ implies that $\hP(k,c)$ is eroding and $c > \co_k$ implies $\hP(k,c)$ is consolidating as would be expected. This uses that the definition of $\hP$ is monotonic in $c$.

The following lemma states that erosion (consolidation) are sufficient conditions for $\const_{0}$ ($\const_{ξ₂}$) to “win the battle” when iterating $\hP$ on $\step₀¹$.
\begin{lemma}
    \label{lem:iterated-eroding}
    Let $k ≥ 3$.
    \begin{enumerate}[(i)]
            • \label{lemitem:iterated-eroding} If $c < \ero_k$ and $i ∈ ℤ$, then $\hP^r \step₀¹(i) \conv{r → ∞} 0$.
            • \label{lemitem:iterated-consolidating} If $c > \co_k$ and $i ∈ ℤ$, then $\hP^r \step₀¹(i) \conv{r → ∞} ξ₂$.
            • \label{lemitem:not-left-and-eroding} $\ero_k ≤ \co_k$.
    \end{enumerate}
\end{lemma}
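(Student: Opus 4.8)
I would prove the three items together: (i) and (ii) by one argument applied to a step function and, respectively, its ``complement'', and (iii) as an immediate consequence. A standing observation I would use throughout is that each output coordinate of $\hP$ is a continuous function of finitely many input coordinates, so $\hP$ — and hence $\hP^R$ for any fixed $R$ — is continuous with respect to pointwise convergence on $[0,1]^{\mathbb Z}$.

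For (i) the plan is to first reduce to the step function from the definition of ``eroding''. Since $\step_0^1 \le \step_{\xi_1/2}^1$ pointwise, monotonicity of $\hP$ (\cref{lem:properties-hP}) gives $0 \le \hP^r \step_0^1 \le \hP^r \step_{\xi_1/2}^1$, so it suffices to show that $\hP^r q_0$ converges pointwise to $\const_0$, where $q_0 := \step_{\xi_1/2}^1$. Let $R$ witness that $\hP(k,c)$ is eroding and put $g_m := \hP^{mR} q_0$. As $q_0$ is non-decreasing in $i$ we have $\sR q_0 \le q_0$, so the eroding inequality $g_1 = \hP^R q_0 < \sR q_0$ yields $g_1 \le g_0$; applying $\hP^R$ repeatedly and using its monotonicity shows $(g_m)_{m\ge 0}$ is non-increasing, hence converges pointwise to some $g_\infty\colon \mathbb Z \to [0,1]$. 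Iterating the eroding inequality while using that $\hP$ commutes with $\sR$ (\cref{lem:properties-hP}) gives $g_m \le \sR^m q_0$, so $g_\infty(i) \le \xi_1/2$ for every $i$. By the standing continuity observation $g_\infty$ is a fixed point of $\hP^R$, so for all $t$, $g_\infty = \hP^{Rt} g_\infty \le \hP^{Rt}\const_{\xi_1/2} = \const_{f^{Rt}(\xi_1/2)}$; since $\xi_1/2 < \xi_1$, \eqref{eq:iterating-f} gives $f^{Rt}(\xi_1/2) \to 0$, so $g_\infty = \const_0$. To upgrade from this arithmetic subsequence to all $r$, I would write $r = mR + s$ with $0 \le s < R$ and note $\hP^r q_0 = \hP^s g_m \to \hP^s \const_0 = \const_0$ (recall $\hP\const_x = \const_{f(x)}$ and $f(0)=0$); as there are only finitely many residues $s$ this gives $\hP^r q_0(i) \to 0$, hence $\hP^r \step_0^1(i) \to 0$.

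For (ii) I would sandwich $\step_0^1$ from both sides. Above, $\step_0^1 \le \const_1$ gives $\hP^r \step_0^1 \le \hP^r \const_1 = \const_{f^r(1)}$, and \eqref{eq:iterating-f} yields $\limsup_{r} \hP^r \step_0^1(i) \le \xi_2$. Below, $\step_0^1 \ge h_0 := \step_0^{(\xi_1+\xi_2)/2}$, and I would run the previous argument dually on $h_m := \hP^{mR} h_0$, with $R$ now witnessing consolidation: $h_0$ is non-decreasing so $\sL h_0 \ge h_0$, hence $h_1 = \hP^R h_0 > \sL h_0 \ge h_0$ and $(h_m)$ is non-decreasing with pointwise limit $h_\infty$; iterating the consolidation inequality gives $h_m \ge \sL^m h_0$, so $h_\infty \ge \const_{(\xi_1+\xi_2)/2}$; continuity makes $h_\infty$ a fixed point of $\hP^R$, so $h_\infty = \hP^{Rt} h_\infty \ge \const_{f^{Rt}((\xi_1+\xi_2)/2)} \to \const_{\xi_2}$ as $t \to \infty$ (since $(\xi_1+\xi_2)/2 \in (\xi_1,1]$), while $h_m \le \hP^{mR}\const_1 = \const_{f^{mR}(1)}$ gives $h_\infty \le \const_{\xi_2}$; thus $h_\infty = \const_{\xi_2}$. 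Upgrading to all $r$ exactly as in (i) yields $\liminf_r \hP^r \step_0^1(i) \ge \xi_2$, and together with the upper bound, $\hP^r \step_0^1(i) \to \xi_2$.

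Item (iii) is then immediate: if $\co_k < \ero_k$, pick $c$ with $\co_k < c < \ero_k$; by the monotonicity in $c$ noted after the definition of the thresholds, $\hP(k,c)$ is both eroding and consolidating, so (i) and (ii) give $\hP^r \step_0^1(0) \to 0$ and $\hP^r \step_0^1(0) \to \xi_2$ simultaneously, contradicting $\xi_2 > 0$ (which holds because $c > \co_k$ presupposes $c > c_k$, the regime in which $\xi_1,\xi_2$ exist). The step I expect to be the crux is the identification of the monotone limit $g_\infty$ (resp. $h_\infty$) with $\const_0$ (resp. $\const_{\xi_2}$): it requires (a) knowing the pointwise limit is a fixed point of $\hP^R$ — exactly where the finite dependency range of $\hP$ is used — and (b) ruling out any competing fixed point trapped in the relevant order interval, for which the scalar identity $\hP^{Rt}\const_x = \const_{f^{Rt}(x)}$ together with \eqref{eq:iterating-f} is the right tool. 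The surrounding bookkeeping, in particular the passage from the subsequence $r \in R\,\mathbb Z_{\ge 0}$ to all $r$, is routine.
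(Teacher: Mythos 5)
Your proof is correct, and it follows the paper's overall strategy: sandwich $\step_0^1$ between the definitional step functions and constant functions, iterate the eroding/consolidating inequality using monotonicity and shift-commutation of $\hP$, and reduce everything to the scalar dynamics of $f$ via $\hP\const_x=\const_{f(x)}$; your item (iii) is the same mutual-exclusion argument as the paper's. Where you differ is in how the limit is extracted. The paper closes the argument with a single chain of inequalities: it bounds $(\hP^{r(1+Rk)}\step_{\xi_1/2}^1)(i)$ above by $(\hP^r\step_{\xi_1/2}^1)(i-kr)$ via the iterated eroding inequality and shift-commutation, and since $\hP^r$ only depends on a window of radius $(k-1)r$ around $i-kr$, which for large $r$ lies entirely in the region where the step function is constant, this equals $f^r(\xi_1/2)\to 0$ --- no continuity of $\hP$ is invoked, only order properties. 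You instead extract a monotone pointwise limit $g_\infty$ of the subsequence $\hP^{mR}q_0$, use continuity of $\hP$ under pointwise convergence (valid, since each output coordinate is a continuous function of finitely many input coordinates) to see that $g_\infty$ is a fixed point of $\hP^R$, and identify it by comparison with constants. Your route needs the extra continuity observation but makes the passage from the subsequence to all $r$ completely explicit, whereas the paper's identification $\lim_r(\hP^r q_0)(i)=\lim_r(\hP^{r(1+Rk)}q_0)(i)$ silently relies on the sequence being non-increasing along residue classes mod $R$ (a consequence of $\hP^Rq_0\le\sR q_0\le q_0$), which is precisely the monotonicity you establish. Both arguments are sound, and your parenthetical observation that $c>\co_k$ (resp.\ the eroding regime) presupposes $c>c_k$, so that $\xi_1,\xi_2$ exist and $\xi_2>0$, matches the paper's standing assumption for this section.
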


\begin{proof}
    \begin{enumerate}[(i)]
            • Let $R ∈ ℕ$ be the witness to the fact that $\hP(k,c)$ is eroding and $i ∈ ℤ$ arbitrary.
            \begin{align*}
                \lim_{r → ∞} (\hP^r\step₀¹)(i) &≤ \lim_{r → ∞} (\hP^r\step_{ξ₁/2}¹)(i) = \lim_{r → ∞} (\hP^r ((\hP^{R})^{kr}\step_{ξ₁/2}¹))(i)\\
                &≤ \lim_{r → ∞} (\hP^r (\sR^{kr} \step_{ξ₁/2}¹))(i) = \lim_{r → ∞} (\sR^{kr}(\hP^r \step_{ξ₁/2}¹))(i)\\
                &= \lim_{r → ∞} (\hP^r \step_{ξ₁/2}¹)(i-kr) = \lim_{r → ∞} (\hP^r \const_{ξ₁/2})(i-kr)\\
                &= \lim_{r → ∞} \const_{f^{r}(ξ₁/2)}(i-kr) = \lim_{r → ∞} f^{r}(ξ₁/2) = 0.
            \end{align*}
            When replacing $\step_{ξ₁/2}¹$ by $\const_{ξ₁/2}$ we exploited that $(\hP q)(i)$ depends only on the values $q(i')$ for $i' ∈ \{i-k+1,…,i+k-1\}$ and thus $(\hP^r q)(i)$ depends only on the values $q(i')$ for $i' ∈ [i-(k-1)r,i+(k-1)r]$.
            • The proof is analogous to the proof of (\ref{lemitem:iterated-eroding}).
            • This is clear, since the implications of (\ref{lemitem:iterated-eroding}) and (\ref{lemitem:iterated-consolidating}) are mutually exclusive.
            \qedhere
    \end{enumerate}
\end{proof}

\section{Erosion is Sufficient for Peeling}
\label{sec:wrapping-up}

We now connect the phenomenon of erosion to the survival probabilities $\q{R}(i)$ we were originally interested in. For $c < \ero_k$ and any $ℓ ∈ ℕ$,  they can be made smaller than any $δ > 0$ in $R = R(δ,ℓ)$ rounds. For $c > \co_k$ and $ℓ$ sufficiently large, no constant number of rounds suffices to reduce all survival probabilities below $ξ₁$.

\begin{lemma} Let $k ≥ 3$.
    \label{lem:back-to-peeling}
    \begin{enumerate}[(i)]
        • \label{lemitem:almost-peelable} If $c < \ero_k$ then $∀ℓ ∈ ℕ,δ > 0\colon ∃R,N ∈ ℕ\colon ∀n ≥ N, i ∈ I: \q{R}(i) < δ$.
        • \label{lemitem:not-peelable-in-constant-rounds} If $c > \co_k$ then $∃L = L(k,c):∀ℓ ≥ L\colon ∃i ∈ I: \lim\limits_{r → ∞}\lim\limits_{n → ∞} \q{r}(i) > ξ₁$.
    \end{enumerate}
\end{lemma}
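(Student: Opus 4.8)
My plan for part~(\ref{lemitem:almost-peelable}) is a short sandwiching argument. By \cref{cor:approxOfq}, $\q{R}(i) = (\P^R\q{0})(i) \pm o(1)$, and for fixed $R$ and $ℓ$ the error is uniform over the finite set $I$, so it suffices to make $(\P^R\q{0})(i)$ small for every $i ∈ I$. From \cref{lem:properties-hP}(i),(iii) — monotonicity of $\P$ is checked exactly as for $\hP$ — an easy induction gives $\P^R\q{0} ≤ \hP^R\q{0}$, and since $\q{0} = 𝟙_I ≤ \step_0^1$, monotonicity of $\hP$ yields $\P^R\q{0} ≤ \hP^R\step_0^1$. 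For $c < \ero_k$, \cref{lem:iterated-eroding}(\ref{lemitem:iterated-eroding}) gives $(\hP^r\step_0^1)(i) → 0$ as $r → ∞$ for each fixed $i$; since $I$ is finite I may pick one $R$ with $(\hP^R\step_0^1)(i) < δ/2$ for all $i ∈ I$, and then one $N$ (finiteness of $I$ again) beyond which the $o(1)$ error stays below $δ/2$. This yields $\q{R}(i) < δ$ for all $n ≥ N$ and $i ∈ I$.

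For part~(\ref{lemitem:not-peelable-in-constant-rounds}) I would first note, using \cref{cor:approxOfq} and \cref{lem:peelingT}, that $\lim_{n→∞}\q{r}(i) = (\P^r\q{0})(i) = \q{r}_T(i)$, which is non-increasing in $r$; hence the double limit exists and equals $\lim_{r→∞}(\P^r\q{0})(i)$, and it is enough to exhibit, for all sufficiently large $ℓ$, some $i_0 ∈ I$ at which it exceeds $ξ₁$. The strategy is to build an \emph{immortal plateau}: a function $q_*\colon ℤ → [0,1]$ with $q_* ≤ 𝟙_I = \q{0}$, equal to $h := (ξ₁+ξ₂)/2 > ξ₁$ on an interval $[a,b]$ that lies deep inside $I$ and equal to $0$ elsewhere, and — crucially — satisfying $\P^R q_* ≥ q_*$ for a suitable fixed $R$. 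Granting this, monotonicity of $\P$ gives $\P^{rR}\q{0} ≥ \P^{rR}q_* ≥ q_*$ for all $r$, so $\lim_{m→∞}(\P^m\q{0})(i_0) = \lim_{r→∞}(\P^{rR}\q{0})(i_0) ≥ h > ξ₁$ for every $i_0 ∈ [a,b]$, finishing the proof.

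Constructing $q_*$ uses two ingredients. First, in the ``deep middle'' $\P$ and $\hP$ agree: by induction on $r$, if $\{i-(k-1)(r-1),\dots,i+(k-1)(r-1)\} ⊆ \{k-1,\dots,ℓ-1\}$ then $(\P^r q)(i) = (\hP^r q)(i)$ for all $q$, because for $i ∈ \{k-1,\dots,ℓ-1\}$ the ``$∩J$'' clause in the definition of $\P$ is vacuous and all arguments that appear lie in $I$, while $(\P^r q)(i)$ depends on $q$ only through its values on $\{i-(k-1)r,\dots,i+(k-1)r\}$, keeping the recursion inside the window. Second, a wide enough plateau of height $h$ is a sub-fixed point of $\hP^R$ for $R$ a consolidation witness: there $\hP^R\step_0^h > \sL\step_0^h ≥ \step_0^h$ (the last inequality because $\step_0^h$ is non-decreasing), so both $\step_0^h$ and its reflection $i ↦ \step_0^h(-i)$ are sub-fixed points of $\hP^R$ (one checks $\hP$ commutes with $i ↦ -i$ exactly as with $\sL,\sR$); now let $q_*$ be the pointwise minimum of the right step $\sR^a\step_0^h$ and the left step $i↦\step_0^h(b-i)$, i.e.\ the plateau of height $h$ on $[a,b]$. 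Since $(\hP^R q)(i)$ only sees $q$ on $\{i-(k-1)R,\dots,i+(k-1)R\}$, for a point $i$ within distance $(k-1)R$ of the left end $a$ of a plateau of width at least $2(k-1)R$ the whole window lies where $q_*$ coincides with $\sR^a\step_0^h$, so $(\hP^R q_*)(i) = (\sR^a\hP^R\step_0^h)(i) ≥ h = q_*(i)$; symmetrically near $b$; trivially $≥ h$ in between; and $≥ 0 = q_*$ outside $[a,b]$. Taking $[a,b] := \{(k-1)R,\dots,ℓ-1-(k-1)(R-1)\}$ — which sits inside $\{k-1,\dots,ℓ-1\} ⊆ I$ and is nonempty of width at least $2(k-1)R$ once $ℓ ≥ L$ for a suitable $L = L(k,c)$ of order $kR$ — the first ingredient upgrades $\hP^R q_* ≥ q_*$ on $[a,b]$ to $\P^R q_* ≥ q_*$ on $[a,b]$, while off $[a,b]$ we trivially have $\P^R q_* ≥ 0 = q_*$; hence $\P^R q_* ≥ q_*$, as required.

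The main obstacle is this last ingredient: a pointwise minimum of two sub-fixed points of $\hP^R$ need not be a sub-fixed point, so one cannot simply intersect the two step functions. The remedy is the finite propagation speed of $\hP^R$ — near each end of a wide enough plateau the operator cannot distinguish the plateau from a single one-sided step, so consolidation applies separately at the two ends. The rest (the induction fixing the region where $\P$ and $\hP$ coincide, and estimating $L$ in terms of the consolidation witness $R$) is routine.
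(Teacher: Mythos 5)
Your proof is correct and takes essentially the same route as the paper's: part~(i) is the identical sandwich $\q{R}(i) = (\P^R\q{0})(i)+o(1) \le (\hP^R\step_0^1)(i)+o(1)$ combined with \cref{lem:iterated-eroding}(\ref{lemitem:iterated-eroding}), and part~(ii) builds exactly the paper's sub-fixed-point plateau ($q^* = 𝟙_{\{d,\dots,ℓ-d-1\}}\cdot(ξ₁+ξ₂)/2$ with $d=(k-1)R$), verified near each end via the finite propagation radius of $\hP^R$ together with the consolidation witness (your reflection argument is the paper's ``by mirroring''), and finished by monotonicity of $\P$ and monotonicity of $\q{r}_T(i)$ in $r$. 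No changes needed.
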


\begin{proof}
    \begin{enumerate}[(i)]
        • Let $ℓ ∈ ℕ$ and $δ > 0$ be arbitrary constants. Using (\ref{lemitem:iterated-eroding}) from Lemma \ref{lem:iterated-eroding}, there exists a constant $R$ such that $\hP^R \step₀¹(i) ≤ δ/2$ for all $i ∈ I$. Therefore for $i ∈ I$:
        \begin{align*}
            \q{R}(i) &\stackrel{\text{Cor \ref{cor:approxOfq}}}= (\P^R \q{0}) (i) + o(1) ≤ (\hP^R \q{0}) (i) + o(1) ≤ (\hP^R \step₀¹)(i) ≤ δ/2 + o(1).
        \end{align*}
        which implies the existence of an appropriate $N ∈ ℕ$.
        • Let $R ∈ ℕ$ be the witness to the fact that $\hP(k,c)$ is consolidating and let $ℓ≥L(k,c) := 4d$ for $d = (k-1)R$. Consider the function $q^* : ℤ → [0,1]$ defined as $q^* = 𝟙_{\{d,…,ℓ-d-1\}} · (ξ₁+ξ₂)/2$, i.e. the function with value $(ξ₁+ξ₂)/2$ on its support $\{d,…,ℓ-d-1\}$. For any $d ≤ i < ℓ-2d$ we have
        \[ \P^R q^*(i) = \hP^R q^*(i) = \hP^R \step_{0}^{(ξ₁+ξ₂)/2}(i-d) ≥ \sL \step_{0}^{(ξ₁+ξ₂)/2}(i-d) = (ξ₁+ξ₂)/2 = q^*(i).\]
        For the first equality, we exploited that $i$ is so far from the borders of $I = \{0,…,ℓ-1\}$ that there is no difference between $\P$ and $\hP$. For the second equality we used that only the values of $q^*$ on $\{i-d,…,i+d\}$ play a role and $q^*$ is a (shifted) step function on that domain. By mirroring, the same argument can be made to get $\P^R q^*(i) ≥ q^*(i)$ for $2d ≤ i < ℓ - d$ as well and thus the point-wise inequality $\P^R q^* ≥ q^*$. Since $\q{0} ≥ q^*$ we get
        \[ \lim_{r → ∞} \lim_{n→∞} \q{r} \stackrel{\text{Cor \ref{cor:approxOfq}}}= \lim_{r → ∞} \P^r 𝟙_I ≥ \lim_{r → ∞} \P^r q^* ≥ q^*.\]
        Since $q^*$ exceeds $ξ₁$ on $\{d,…,ℓ-d-1\}$, this implies the claim.\qedhere
    \end{enumerate}
\end{proof}
While \cref{lem:back-to-peeling}(\ref{lemitem:almost-peelable}) is sufficient to show that that all but a $δ$-fraction of the vertices is peeled whp if $c < \ero_k$, we still need the following combinatorial argument that shows that whp no non-empty core is contained within the remaining vertices. Arguments such as these are standard, many similar ones can be found for instance in \cite{FKP:The_Multiple:2011,FP:Sharp:2012,Luczak:A-simple-solution,L:A_New_Approach:2012,Luczak:Size-and-connectivity-of-the-k-core:1991,Molloy05:Cores-in-random-hypergraphs,MPW:DoubleHashing:2018}.

\begin{lemma}
    \label{lem:no-tiny-core}
    For any $k ≥ 3$, $ℓ ∈ ℕ$ and $c ∈ (0,1)$ there exists $δ = δ(k,ℓ) > 0$ such that the following holds whp. For any non-empty set $V' ⊆ V$ of at most $δ|V|$ vertices of $F = (V,E)$, there exists $v ∈ V'$ of degree at most $1$ in the sub-hypergraph of $H$ induced by $V'$.
\end{lemma}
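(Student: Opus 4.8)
The plan is to run a standard first-moment (union bound) argument over all potential small cores, exploiting the fact that a sub-hypergraph of minimum degree at least $2$ on $t$ vertices must contain at least $t/(k-1)$ edges, hence is dense relative to a random hypergraph with only $cn\ell$ edges. First I would fix the relevant parameters: write $M = cn\ell$ for the number of edges and $|V| = n(\ell+k-1)$ for the number of vertices, both $\Theta(n)$ with $\ell,k,c$ constant. Call a set $V' \subseteq V$ \emph{bad} if it is non-empty, $|V'| \le \delta|V|$, and every vertex of $V'$ has degree at least $2$ in the induced sub-hypergraph $F[V']$. If $V'$ is bad with $|V'| = t$, then counting incidences shows $F[V']$ contains at least $\lceil t/(k-1)\rceil \ge t/(k-1)$ edges, and in particular $t \ge k$ (a single edge gives each of its $k$ vertices degree $1$, not $2$, so at least two edges and $t\ge k$ actually we need a bit more care — but $t\ge k$ suffices as a crude lower bound on the size of the smallest possible core). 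So it is enough to bound the probability that for some $t$ with $k \le t \le \delta|V|$ there is a set of $t$ vertices spanning at least $s := \lceil t/(k-1)\rceil$ edges entirely inside it.

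The key computation is the union bound. There are at most $\binom{|V|}{t}$ choices for $V'$. For a fixed $V'$ of size $t$, each edge $e \in E$ lands entirely inside $V'$ only if all $k$ of its random vertices do. Because an edge of type $j$ picks one uniform vertex from each of the segments $j,\dots,j+k-1$, the probability that all $k$ picks land in $V'$ is at most $\prod_{\text{segments}} (|V' \cap \text{segment}|/n) \le (t/n)^k$ — and crucially, since the $k$ picks are independent, this is at most $(\max_i |V'\cap \text{segment } i|/n)^k \le (t/n)^k$; but a cleaner bound that suffices is simply: the probability a single edge falls inside $V'$ is at most $(t/n)^{k}$ when $k$ of the $t$ vertices could be in relevant segments — actually the simplest valid bound uses that each of the $k$ coordinates has probability at most $t/n$ of hitting $V'$, giving $\le (t/n)^k$. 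The number of edges inside $V'$ is then stochastically dominated by $\Bin(M, (t/n)^k)$, so the probability it is at least $s$ is at most $\binom{M}{s}(t/n)^{ks}$. Multiplying, the expected number of bad sets of size $t$ is at most
\[
    \binom{|V|}{t}\binom{M}{s}\Big(\frac{t}{n}\Big)^{ks} \le \Big(\frac{e|V|}{t}\Big)^t \Big(\frac{eM}{s}\Big)^{s}\Big(\frac{t}{n}\Big)^{ks}.
\]
Now substitute $s \ge t/(k-1)$, $|V| = \Theta(n)$, $M = \Theta(n)$, and $t/n \le \delta(\ell+k-1)$: the factor $(t/n)^{ks}$ contributes roughly $(t/n)^{kt/(k-1)} = (t/n)^{t}\cdot(t/n)^{t/(k-1)}$, and the $(t/n)^{t}$ cancels the $|V|^t/t^t$ growth from $\binom{|V|}{t}$ up to constants, leaving an extra factor of order $(C\,t/n)^{t/(k-1)}$ for a constant $C = C(k,\ell,c)$. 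Choosing $\delta$ small enough that $C\delta(\ell+k-1) < 1/2$ makes each term at most $2^{-t/(k-1)}$, and summing over $k \le t \le \delta|V|$ gives a geometric tail that is $o(1)$ (indeed $O(2^{-k/(k-1)})$, a small constant — to get $o(1)$ one notes the smallest bad set has size growing or, more simply, treats small constant $t$ separately: for each fixed small $t$ the probability of a bad set is $O(n^{t}\cdot n^{-kt/(k-1)}) = O(n^{-t/(k-1)}) = o(1)$). Hence whp no bad set exists, which is exactly the statement.

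The main obstacle — and the one place requiring genuine care rather than bookkeeping — is handling the smallest values of $t$, where the geometric-sum bound alone only yields a small constant, not $o(1)$. The fix is routine: for each fixed constant $t$ (there are finitely many, since $\ell,k$ are constant and we may split the range at, say, $t \le t_0$ versus $t > t_0$), the expected number of bad $t$-sets is polynomially small in $n$, namely $O(n^{-t/(k-1)})$, because the exponent of $n$ is $t - ks \le t - kt/(k-1) = -t/(k-1) < 0$; summing these $O(1)$ many $o(1)$ terms is still $o(1)$, and the tail $t > t_0$ is handled by the geometric bound above with room to spare. A secondary subtlety is the bound on the probability that a single edge falls inside $V'$: because edge coordinates are constrained to segments, one should check that $(t/n)^k$ is a valid upper bound — it is, since for each $t' \in \{0,\dots,k-1\}$ the probability that coordinate $o_{t'}$ lands in $V'$ is $|V' \cap \text{segment } (j+t')|/n \le t/n$, and the coordinates are independent. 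One last remark: the hypothesis $c \in (0,1)$ is not really used in the argument (any constant $c$ works), and repeated edges, which occur with probability $O(1/n)$, are harmless and can be ignored or absorbed into the $o(1)$.
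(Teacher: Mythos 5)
Your overall strategy (first moment over all small vertex sets spanning too many edges) is exactly the paper's, but there is a genuine quantitative gap: your lower bound on the number of edges induced by a minimum-degree-$2$ set is too weak, and it is too weak in precisely the way that makes the union bound fail. Counting incidences, a $k$-uniform sub-hypergraph on $t$ vertices with minimum degree $2$ has degree sum at least $2t$ and each edge contributes $k$, so it has at least $s \ge 2t/k$ edges --- not $t/(k-1)$ (that looks like the connectivity bound $s\ge (t-1)/(k-1)$, which is smaller since $2t/k > t/(k-1)$ for all $k\ge 2$). The distinction matters because the per-$t$ term of the first moment is, in powers of $n$, of order $n^{t}\cdot n^{s}\cdot n^{-ks} = n^{t-s(k-1)}$ (the $n^{s}$ coming from $\binom{M}{s}\le (eM/s)^s$ with $M=\Theta(n)$). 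With $s = t/(k-1)$ this exponent is exactly $0$: the expected number of bad $t$-sets is $\Theta(\mathrm{const}^{t})$, which neither decays in $n$ for fixed $t$ nor sums to $o(1)$ over $t$. Your write-up masks this by absorbing $(eM/s)^s = (eM(k-1)/t)^{t/(k-1)} = \Theta((n/t)^{t/(k-1)})$ into ``a constant $C$''; it is not a constant, and it exactly cancels the leftover $(t/n)^{t/(k-1)}$ you claim survives. The same omission invalidates your small-$t$ patch: including the $\binom{M}{s}$ factor, the expected number of bad sets of fixed size $t$ is $O(n^{t - s(k-1)}) = O(1)$ under your bound, not $O(n^{-t/(k-1)})$.

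The fix is simply to use $s \ge \lceil 2t/k\rceil$ (equivalently, as in the paper, to observe that a set inducing fewer than $2|V'|/k$ edges has a vertex of induced degree at most $1$). Then the exponent becomes $t - \tfrac{2t}{k}(k-1) = -t(k-2)/k < 0$ for $k\ge 3$, each term is of order $(C\,t/n)^{t(k-2)/k}$ for a constant $C=C(k,\ell,c)$, and both the fixed-$t$ contributions (each $O(n^{-t(k-2)/k})$) and the geometric tail for larger $t$ (after choosing $\delta$ with $C\delta(\ell+k-1)<1/2$) are $o(1)$. With that correction the rest of your argument --- the $(t/n)^k$ bound per edge via the independent per-segment coordinates, the binomial domination, and the remark about duplicate edges --- is sound and matches the paper's proof.
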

\begin{proof}
    In the course of the proof we will implicitely encounter positive upper bounds on $δ$ in terms of $k$ and $ℓ$. Any $δ > 0$ small enough to respect these bounds is suitable. We consider the events $(W_{s,t})_{\smash{k ≤ s ≤ δ|V|, \frac{2s}{k} ≤ t ≤ |E|}}$ that some small set $V'$ of size $s$ induces $t$ edges. If \emph{none} of the events occur, then all such $V'$ induce less than $2|V'|/k$ edges and therefore induce hypergraphs with average degree less than $2$, so a vertex of degree at most $1$ exists in each of them.
    
    It is thus sufficient to show that $\Pr[\bigcup_s \bigcup_t W_{s,t}] = \O(1/n)$. We shall use a first moment argument.
    First note that $F$ has duplicate edges with probability $\binom{cnℓ}{2}(ℓn^k)^{-1} = \O(n^{-1})$, so we restrict our attention to $F$ without duplicate edges.
    Given $s$ and $t$ there are $\binom{(ℓ+k-1)n}{s}$ ways to choose $V'$ and at most $\binom{s^k}{t}$ ways to choose which $k$-tuples of vertices in $V'$ induce an edge. The probability that any given $k$-tuple actually does induce an edge is either zero if the $k$ vertices are not of consecutive segments or $1 - (1- (ℓn^k)^{-1})^{cnℓ} ≤ \frac{cn}{n^k} = \frac{1}{n^{k-1}}$. Thus, using constants $C, C', C'',C''' ∈ ℝ^+$ (that may depend on $k$ and $ℓ$) where precise values do not matter, we get
    \begin{align*}
        \ \hspace{-0.8cm}\Pr[\bigcup_{s = k}^{δ|V|} \bigcup_{t = \frac{2s}{k}}^{|E|} &W_{s,t}] ≤ \sum_{s = k}^{δ|V|}\sum_{t = \frac{2s}{k}}^{|E|}\Pr[W_{s,t}] ≤ \sum_{s = k}^{δ|V|}\sum_{t = \frac{2s}{k}}^{|E|} \binom{(ℓ+k-1)n}{s} \binom{s^k}{t} \bigg(\frac{1}{n^{k-1}}\bigg)^t\\
        &≤ \sum_{s = k}^{δ|V|}\sum_{t = \frac{2s}{k}}^{|E|} \bigg(\frac{e(ℓ+k-1)n}{s}\bigg)^{s} \bigg(\frac{es^k}{tn^{k-1}}\bigg)^{t}
        ≤ \sum_{s = k}^{δ|V|}\sum_{t = \frac{2s}{k}}^{|E|} \bigg(C\frac{n}{s}\bigg)^{s} \bigg(C'\frac{s^{k-1}}{n^{k-1}}\bigg)^{t}\\
        &≤ 2\sum_{s = k}^{δ|V|}\bigg(C\frac{n}{s}\bigg)^{s} \bigg(C'\frac{s^{k-1}}{n^{k-1}}\bigg)^{\frac{2s}{k}} = 2\sum_{s = k}^{δ|V|}\bigg(C''\frac{n^k}{s^k}\frac{s^{2k-2}}{n^{2k-2}}\bigg)^{\frac{s}{k}} = 2\sum_{s = k}^{δ|V|}\bigg(C'''\frac{s}{n}\bigg)^{\frac{s(k-2)}{k}}.
    \end{align*}
    To get rid of the summation over $t$, we assumed $(s/n)^{k-1} ≤ δ^{k-1} ≤ \frac{1}{2C'}$. Elementary arguments show that in the resulting bound, the contribution of summands for $s ∈ \{k,…,2k\}$ is of order $\O(\frac{1}{n})$, the contribution of the summands with $s ∈ \{2k+1,…,O(\log n)\}$ are of order $\O(\frac{\log n}{n²})$ (using $\frac{s}{n} ≤ \smash{\frac{\log n}{n}}$) and the contribution of the remaining terms with $s ≥ 3\log₂ n$ is of order $\O(2^{-\log₂ n}) = \O(\frac{1}{n})$  (using $C''' \frac{s}{n} ≤ C'''δ(ℓ+2) ≤ \frac{1}{2}$).
    
    This gives $\Pr[\bigcup_{s,t} W_{s,t}] = \O(n^{-1})$, proving the claim.
\end{proof}

We are ready to prove the “$\ero_k ≤ f_k$” of \cref{thm:main}, stated here as a theorem of its own.
\begin{theorem}
    \label{thm:erosion-implies-peeling}
    For all $k ≥ 3$ we have $\ero_k ≤ f_k$.
\end{theorem}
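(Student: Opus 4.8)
The plan is to combine \cref{lem:back-to-peeling}, \cref{lem:no-tiny-core}, and a single application of Markov's inequality. Fix $k ≥ 3$, a density $c < \ero_k$, and a number of segments $ℓ ∈ ℕ$; it suffices to show that $F = F(n,k,c,ℓ)$ is peelable \whp, since then every $c < \ero_k$ belongs to the set whose supremum defines $f_k$. Recall that $F$ is peelable exactly when its core is empty, and that the core — being untouched by peeling — is contained in the set $S_R$ of vertices surviving $R$ rounds of the parallel process $\peel(F)$, for every constant $R ∈ ℕ₀$. The idea is to choose $R$ so that \whp only a tiny (at most $δ$-)fraction of the vertices survives $R$ rounds, and then to invoke \cref{lem:no-tiny-core} to rule out a non-empty core inside this small surviving set. (\cref{lem:no-tiny-core} is stated for $c ∈ (0,1)$, which is all that is relevant here; in any case its first-moment proof works verbatim for any constant $c > 0$.)

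In detail, let $δ = δ(k,ℓ) > 0$ be the constant supplied by \cref{lem:no-tiny-core}, and let $γ > 0$ be an arbitrary target failure probability. Set $ε := δγ/2$ and apply \cref{lem:back-to-peeling}(\ref{lemitem:almost-peelable}) with this $ε$ to obtain constants $R, N ∈ ℕ$ such that $\q{R}(i) < ε$ for all $n ≥ N$ and all $i ∈ I$. Since the alive set of the rooted process $\peel_v(F)$ always contains the alive set of $\peel(F)$ — the two differ only in that $\peel_v$ spares the root $v$ at degree $1$ — a vertex $v$ of segment $i$ lies in $S_R$ with probability at most $\q{R}(i) < ε$. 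Summing over all $|V| = (ℓ+k-1)n$ vertices gives $\E[|S_R|] < ε|V|$, whence Markov's inequality yields $\Pr[\,|S_R| ≥ δ|V|\,] ≤ ε/δ = γ/2$.

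Finally, condition on the event $\{|S_R| < δ|V|\}$ intersected with the \whp event that the conclusion of \cref{lem:no-tiny-core} holds; this has probability at least $1 - γ/2 - o(1)$. On it the core of $F$ is a subset of $S_R$ and hence has fewer than $δ|V|$ vertices, so if it were non-empty \cref{lem:no-tiny-core} would produce a vertex of degree at most $1$ inside it, contradicting the fact that the core has minimum degree at least $2$. Hence the core is empty and $F$ is peelable, so $\Pr[F \text{ is not peelable}] ≤ γ/2 + o(1) ≤ γ$ for all sufficiently large $n$. As $γ > 0$ was arbitrary, $\Pr[F \text{ is peelable}] → 1$, and since this holds for every $ℓ$, we obtain $\ero_k ≤ f_k$.

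The one delicate point — as opposed to routine bookkeeping — is the order of quantifiers behind the Markov step. Vertex degrees in $F$ are unbounded, so $|S_R|$ is not obviously concentrated and one cannot simply assert that it is below a preordained size \whp. What saves the argument is that the target fraction $δ$ is pinned down first (by \cref{lem:no-tiny-core}), and only afterwards is the survival bound $ε$ — and thus the mean $\E[|S_R|]$ — driven below an arbitrary multiple of $δ|V|$; plain Markov then suffices, with no concentration inequality needed. The remaining ingredients — the inclusion of the core in $S_R$, and the observation that rooted peeling is a handicapped version of parallel peeling — are elementary.
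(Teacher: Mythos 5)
Your proof is correct and follows the same skeleton as the paper's: fix $c < \ero_k$ and $ℓ$, use \cref{lem:back-to-peeling}(i) to bound the survival probabilities after a constant number $R$ of rounds, observe that the parallel process $\peel(F)$ deletes at least what each rooted process $\peel_v(F)$ deletes, and finish with \cref{lem:no-tiny-core} to exclude a small non-empty core. The one place where you genuinely diverge is the step from ``small expected number of survivors'' to ``small number of survivors \whp''. The paper fixes the target $\delta/2$ once, gets $\E[|S_R|] ≤ \delta|V|/2$, and then invokes Azuma's inequality to concentrate $|S_R|$ below $\delta|V|$. You instead exploit that \cref{lem:back-to-peeling}(i) holds for an \emph{arbitrary} tolerance: by driving the survival probability down to $\varepsilon = \delta\gamma/2$ for an arbitrary $\gamma>0$ and applying plain Markov, you get $\Pr[|S_R| ≥ \delta|V|] ≤ \gamma/2$, and letting $\gamma \to 0$ recovers the \whp statement without any concentration inequality. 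This is a valid and arguably cleaner route: as you note, the degrees in $F$ are unbounded, so a bounded-differences martingale argument needs some care (e.g.\ conditioning on the maximum degree), whereas your Markov argument is immune to this. What you give up is only quantitative: Azuma would yield an exponentially small failure probability for a \emph{fixed} $R$, while your bound decays only as slowly as you let $\gamma$ decay (with $R$ growing accordingly). Since the theorem only asks for convergence to $1$, nothing is lost.
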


\begin{proof}
        We need to prove that for any $c < \ero_k$ and any $ℓ ∈ ℕ$ the fuse graph $F = F(n,k,c,ℓ)$ is peelable whp.

        First, let $δ = δ(k,ℓ)$ be the constant from Lemma \ref{lem:no-tiny-core} and $R = R(δ/2,ℓ)$ as well as $N = N(δ/2,ℓ)$ the corresponding constants from \cref{lem:back-to-peeling}(\ref{lemitem:almost-peelable}).
        
        Assuming $n ≥ N$ we have $\q{R}(i) ≤ δ/2$ for all $i ∈ I$, meaning any vertex $v$ from $F$ is \emph{not} deleted within $R$ rounds of $\peel_v(F)$ with probability at most $δ/2$. Since $\peel(F)$ deletes at least the vertices that any $\peel_v(F)$ for $v ∈ V$ deletes, the expected number of vertices not deleted by $\peel(F)$ within $R$ rounds is at most $δ|V|/2$.
        
        Now standard arguments using Azuma's inequality (see e.g. \cite[Theorem 13.7]{MU:Probability:2017}) suffice to conclude that whp at most $δ|V|$ vertices are not deleted by $\peel(F)$ within $R$ rounds.
        
        By Lemma \ref{lem:no-tiny-core} whp neither the remaining $δ|V|$ vertices, nor any of its subsets induces a hypergraph of minimum degree $2$. Therefore the core of $F$ is empty.
\end{proof}

A natural follow-up question to \cref{thm:erosion-implies-peeling} would be whether $\ero_k = f_k$, which would also imply $f_k ≤ \co_k$. To establish this stronger claim, we would have to exclude the possibility that for certain densities $c$ there is function $r(n) = ω(1)$ such that a constant fraction of vertices survive $r(n)$ rounds but are nevertheless deleted eventually. It seems plausible that arguments similar to \cite[Lemma 4]{Molloy05:Cores-in-random-hypergraphs} can be used, but since our main goal is reached we do not pursue this now.


\section{Approximating the Erosion and Consolidation Thresholds}
\label{sec:approximating}

We now approximate the thresholds $\ero_k$ (and analogously $\co_k$) with numerical methods. Note that if $c < \ero_k$ (if $c > \co_k$), then this can be verified in a finite computation, because the correct value of $R$, together with a bound on the required precision of floating point operations (when rounding conservatively), constitutes a witness. Moreover, the function $\hP^{r} \step_{ξ₁/2}^{1}$ can be represented by a finite number of reals, since it is constant on $(-∞,-(k-1)r]$ and constant on $[(k-1)r,∞)$.

To approximate $\ero_k$ (and $\co_k$) with high precision, more efficient approaches are required, however. We compute upper bounds on $\hP^r \step_{ξ₁/2}^{1}$ by focusing on a finite domain $[-D,D]$ for some $D ∈ ℕ$ and rounding conservatively outside of it. Concretely we define $(a_r : ℤ → [0,1])_{r ∈ ℕ₀}$ (dependent on $k$, $c$ and $D$) with $a₀ := \step_{ξ₁/2}^{1}$ (analogously $(b_r : ℤ → [0,1])_{r ∈ ℕ₀}$ with $b₀ := \step_{0}^{(ξ₁+ξ₂)/2}$). For $r ≥ 0$ we let
\[ a_{r+1}(i) := \begin{cases}
    a_{r+1}(-D) & \text{ if $i < -D$},\\
    \hP a_r(i) & \text{ if $-D ≤ i ≤ D$},\\
    1 & \text{ if $i > D$}.
\end{cases}
\qquad b_{r+1}(i) := \begin{cases}
    0 & \text{ if $i < -D$},\\
    \hP b_r(i) & \text{ if $-D ≤ i ≤ D$},\\
    \hP b_r(D) & \text{ if $i > D$}.
\end{cases}\]
Due to the limited effective domain, each $a_r$ is given by $2D+2$ values. It is easy to see that each $a_r$ is monotonous and fulfils $a_{r+1} ≤ \hP a_r$, which implies $\hP^r \step_{ξ₁/2}^{1} ≤ a_r$. If we find $a_r(0) < ξ₁/2$, then by monotonicity we have $a_r ≤ \sR \step_{ξ₁/2}^{1}$ and therefore:
\[ ∃R ∈ ℕ: a_R(0) < ξ₁/2 \quad ⇒ \quad ∃R ∈ ℕ:  \hP^R \step_{ξ₁/2}^1 < \sR \step_{ξ₁/2}^1 \quad \stackrel{\text{def}}{⇒} \quad c < \ero_k.\]
(Analogously if $b_R(-1) > (ξ₁+ξ₂)/2$ then $c > \co_k$ follows.)

\paragraph{Experimental Results.} For $D = 50$ and all $k ∈ \{3,…7\}$ we computed, using double-precision floating point values, $a₁,a₂,…$ and $b₁,b₂,…$ for various $c$. For each pair $(k,c)$, we either find that $\hP(k,c)$ is consolidating, it is eroding, or none of the two can be verified. The results suggest that $\ero_k < c_k^* < \co_k$ where $c_k^*$ is the orientability threshold for $k$-ary Erdős-Renyi hypergraphs.

Concretely, we considered for $j = 1,2,3,…$ the values $c_k^* - 2^{-j}$ and tried to verify that they are less than $\ero_k$. The largest for which we succeeded is report as $b_k$ in \cref{tab:upper-and-lower-bounds} on page \pageref{tab:upper-and-lower-bounds}.
The largest number of iterations required was $6·10^{7}$. For the first value that could not be shown to be less than $\ero_k$, our approximations of the sequence of $(a_i)_{i∈ℕ}$ became stationary with $a[0] > ξ₁/2$, i.e.~the double-precision floats did not change any more (the highest number of iterations to reach this point was $2·10^8$). It is possible that the value is still less than $\ero_k$ and our choice of $D$ or the precision of our floats is simply insufficient. Further experiments with $128$-bit floats and larger values of $D$ suggest however, that there is a tiny but real gap between $\ero_k, c_k^*$ and $\co_k$ and the natural conjecture of equality is misplaced.

In the same way we report the smallest value of the form $c_k^* + 2^{-j}$ for which we verified that it exceeds $\co_k$ as $B_k$ in \cref{tab:upper-and-lower-bounds}.


\section{Peeling Necessitates Orientability of Erd\texorpdfstring{ő}{ö}s-Renyi Hypergraphs}
\label{sec:upperbound}

We now prove the “$f_k ≤ c_k^*$”-half of \cref{thm:main}, stated as \cref{thm:non-orientability-implies-non-peelability}. Recall that an \emph{orientation} of a hypergraph $H = (V,E)$ is an injective map $f: E → V$ with $f(e) ∈ e$ for all $e ∈ E$ and $c_k^*$ is the thresholds for orientability $k$-uniform Erdős-Renyi hypergraphs.

After classical ($2$-ary) cuckoo hashing was discovered \cite{PR:Cuckoo:2004} (relying on $c₂^* = \frac 12$), the thresholds for $k > 2$ were determined independently by \cite{DGMMPR:Tight:2010,FP:Sharp:2012,FM:Maximum:2012}, with generalisations to other graphs and hypergraphs studied in \cite{CSW:The_Random:2007,FR:The_k-orientability:2007,Leconte:Cuckoo:2013,LP:3.5-Way:2009,Walzer:OverlappingBlocks:2018}.

Note that if $H$ is peelable then it is also orientable: Just orient each edge $e$ to a vertex $v ∈ e$ such that $v$ and $e$ are deleted in the same round of $\peel(H)$.

Our proof of \cref{thm:non-orientability-implies-non-peelability} relies strongly on a deep and remarkable theorem due to Lelarge \cite{L:A_New_Approach:2012}. To clarify its role in our proof, we restate it in weaker but sufficient form.
\begin{theorem}[{Lelarge \cite[Theorem 4.1]{L:A_New_Approach:2012}}]
    Let $(G_n = (A_n, B_n, E_n))_{n∈ℕ}$ be a sequence of bipartite graphs with $|E_n| = O(|A_n|)$. Let further $M(G_n)$ be the size of a maximum matching in $G_n$. If the \emph{random weak limit} $ρ$ of $(G_n)_{n∈ℕ}$ is a \emph{bipartite unimodular Galton-Watson tree}, then $\lim\limits_{n→∞} \frac{M(G_n)}{|A_n|}$ exists almost surely and depends only on $ρ$.
\end{theorem}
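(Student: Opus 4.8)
The plan is to prove the statement in two stages: a soft concentration argument that reduces the almost-sure claim to convergence of $\E[M(G_n)]/|A_n|$, and then an identification of that limit with a functional $\gamma(\rho)$ depending only on $\rho$, exploiting that the matching ratio can be estimated from bounded-radius neighbourhood statistics (\emph{testability}). For the concentration: deleting or adding a single edge changes $M(G)$ by at most $1$, so the bounded-difference (Azuma--Hoeffding) inequality applied to the edge-exposure filtration, together with the hypothesis $|E_n| = \O(|A_n|)$, gives $\Pr[\,|M(G_n)-\E M(G_n)| > \varepsilon|A_n|\,] \le 2e^{-\Omega(\varepsilon^2|A_n|)}$, whence $M(G_n)/|A_n| - \E[M(G_n)]/|A_n| \to 0$ almost surely by Borel--Cantelli. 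It therefore suffices to show $\E[M(G_n)]/|A_n| \to \gamma(\rho)$.

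For the second stage I would sandwich $M(G_n)$ between two quantities that are both \emph{local}, using the classical Hopcroft--Karp bound: any matching with no augmenting path of length at most $2k-1$ has size at least $\frac{k}{k+1}M(G_n)$. For each $k$, a randomized local-improvement procedure in the style of constant-time approximation algorithms inspects only a bounded-radius neighbourhood of each vertex and, with probability $1-o(1)$, outputs a matching $M_k(G_n)$ with no augmenting path of length $\le 2k-1$; iterating (run the $k$-th procedure, then keep augmenting) we may assume $|M_k(G_n)| \le |M_{k+1}(G_n)|$. Since the per-edge inclusion probability of $M_k$ is a function of that edge's bounded-radius neighbourhood, the local weak convergence $G_n \to \rho$ gives $\E|M_k(G_n)|/|A_n| \to \mu_k(\rho)$ for a functional $\mu_k$ of $\rho$ alone, non-decreasing in $k$. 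Because $|M_k(G_n)| \le M(G_n)$ always and $|M_k(G_n)| \ge \frac{k}{k+1}M(G_n)$ with probability $1-o(1)$, taking $n\to\infty$ yields, for every $k$, $\liminf_n \E[M(G_n)]/|A_n| \ge \mu_k(\rho)$ and $\limsup_n \E[M(G_n)]/|A_n| \le \frac{k+1}{k}\mu_k(\rho)$; letting $k\to\infty$ forces $\lim_n \E[M(G_n)]/|A_n| = \sup_k \mu_k(\rho) =: \gamma(\rho)$, which depends on $\rho$ only. Combined with the first stage, $M(G_n)/|A_n| \to \gamma(\rho)$ almost surely.

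When $\rho$ is in addition a unimodular Galton--Watson tree, $\gamma(\rho)$ can be made explicit, which is what makes the theorem usable for orientation thresholds: conditional independence of the subtrees hanging off the root turns the cavity recursion for the probability that a vertex is left exposed into a recursive distributional equation for a monotone operator on probability measures on $[0,1]$, whose two extremal fixed points are reached by iterating from $\delta_0$ and from $\delta_1$, and the mass-transport principle (unimodularity) forces these fixed points to yield the same matching value $\gamma(\rho)$. This is the route of \cite{L:A_New_Approach:2012} (in the spirit of the objective method, see \cite{AS:Objective_Method:2004}); it is not required for the weaker statement above, the sandwich already sufficing.

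The main obstacle is the claim, used in the second stage, that an approximately maximum matching is producible by a genuinely local randomized rule whose output size concentrates and whose per-edge inclusion probabilities are continuous functionals of the local structure — equivalently, that the matching ratio is a testable parameter. Making this rigorous requires care both about unbounded degrees (here the $B$-side of $G_n$ is only light-tailed, not bounded) and about the inspection radius being bounded only in expectation, as in constant-time approximation algorithms. By contrast the concentration inequality, the Benjamini--Schramm convergence of bounded-radius statistics, and the Hopcroft--Karp inequality itself are all routine.
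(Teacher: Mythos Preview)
The paper does not prove this statement at all. It is quoted verbatim (in weakened form) from Lelarge~\cite{L:A_New_Approach:2012} and used as a black box in the proof of \cref{thm:non-orientability-implies-non-peelability}; the paper even calls it ``a deep and remarkable theorem'' precisely because it is imported, not established here. There is therefore no proof in the paper to compare your proposal against.

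If you want a comparison to Lelarge's \emph{actual} argument: his proof is not via testability and Hopcroft--Karp sandwiching but via the objective/cavity method. He works directly on the limit tree, sets up the recursive distributional equation you allude to in your final paragraph, and uses unimodularity (mass transport) to show that the two extremal fixed points of the monotone cavity operator yield the same matching density; this common value is then identified with $\lim_n M(G_n)/|A_n|$ by a correlation-decay argument on the tree combined with local weak convergence. Your route through constant-time approximation algorithms is a genuinely different (and in some respects more elementary) alternative that has appeared in the testability literature for \emph{bounded-degree} graphs, but note two real gaps in your write-up: (i) the Azuma step presupposes an edge-exposure martingale, i.e.\ a specific probabilistic model for $G_n$, which the theorem as stated does not provide --- ``random weak limit'' alone does not give you independent edges; and (ii) as you yourself flag, the Nguyen--Onak-style local rule has only expected-bounded query complexity when degrees are unbounded, and turning this into a statement about convergence of $\E|M_k(G_n)|/|A_n|$ under local weak convergence to an unbounded-degree Galton--Watson tree needs a tightness/uniform-integrability argument that you have not supplied.
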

To see the connection, note that an orientation of a hypergraph is a left-perfect matching in its (bipartite) incidence graph.

\begin{theorem}
    \label{thm:non-orientability-implies-non-peelability}
    For all $k ≥ 3$ we have $f_k ≤ c_k^*$.
\end{theorem}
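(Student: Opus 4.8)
The plan is to show that if $c > c_k^*$, then for $\ell$ large enough the fuse graph $F = F(n,k,c,\ell)$ fails to be orientable whp, which by the observation that peelability implies orientability gives that $F$ is not peelable whp, hence $c \geq f_k$ is impossible, i.e. $f_k \leq c_k^*$. So the core task is to understand orientability of fuse graphs and relate it to orientability of Erd\H{o}s-R\'enyi hypergraphs via Lelarge's theorem.

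First I would pass from hypergraphs to bipartite incidence graphs: let $G_n$ be the bipartite graph with one side $E$ (the $cn\ell$ edges) and the other side $V$ (the $\approx n(\ell+k-1)$ vertices), with an $E$--$V$ edge whenever a hyperedge is incident to a vertex. An orientation of $F$ is exactly a left-perfect matching of $G_n$ (a matching saturating the $E$-side), so $F$ is orientable iff $M(G_n) = |E|$. I would then identify the random weak limit of $(G_n)$. Because of the segmented geometry this limit is not itself the Erd\H{o}s-R\'enyi limit, but — and this is the point — for $\ell \gg k$ a typical hyperedge and a typical vertex lie far from the boundary segments $0$ and $\ell+k-2$, so their local neighbourhoods look exactly like those in the bulk. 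In the bulk, a hyperedge of type $j$ picks one uniform vertex from each of segments $j,\dots,j+k-1$, and a bulk vertex of segment $i$ has $\mathrm{Po}(c)$ incident hyperedges for each of the $k$ relevant types — summing to $\mathrm{Po}(kc)$ incident hyperedges total, each independently of one of $k$ types. Crucially, once one forgets the type labels and the segment labels (which are irrelevant to matchings), this local structure is \emph{the same} unimodular Galton--Watson tree as the random weak limit of the $k$-uniform Erd\H{o}s--R\'enyi hypergraph $H_{n',c n'}^k$ on $n'$ vertices with the same density $c$: every vertex has $\mathrm{Po}(kc)$ incident edges, every edge has $k$ incident vertices, all with the appropriate size-biasing. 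Hence the random weak limit $\rho$ of $(G_n)_n$ is the bipartite unimodular Galton--Watson tree associated with $H^k_{n,cn}$.

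Now Lelarge's theorem applies to $(G_n)$ — which has $|E_n| = cn\ell = O(n) = O(|A_n|)$ — and tells us $\lim_n M(G_n)/|E_n|$ exists a.s.\ and depends only on $\rho$. Since $\rho$ is identical to the limit arising from $H^k_{n,cn}$, this limiting ratio equals the corresponding limiting ratio $\lim_n M(\widetilde G_{n'})/|\widetilde E_{n'}|$ for the incidence graphs $\widetilde G_{n'}$ of Erd\H{o}s--R\'enyi hypergraphs $H^k_{n',cn'}$. For $c > c_k^*$ the latter hypergraph is not orientable whp, and in fact the known results give that the maximum matching leaves a linear number of $E$-vertices unmatched, so that limiting ratio is a constant $\gamma < 1$. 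Therefore $M(G_n)/|E_n| \to \gamma < 1$, so whp $M(G_n) < |E_n|$, i.e.\ $F$ is not orientable and hence not peelable. Some care is needed with boundary effects: to make the random weak limits genuinely coincide I would either let $\ell \to \infty$ after $n$ (so the fraction of boundary-affected vertices/edges vanishes), or argue directly that the finitely many boundary segments contribute only $O_\ell(n) = o(n \cdot \text{(gap)})$ to the matching deficiency, which does not wash out the linear deficit $\gamma |E_n|$ coming from the bulk.

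The main obstacle I expect is precisely this limit-identification step: verifying rigorously that the random weak limit of the fuse-graph incidence graphs is a bipartite \emph{unimodular} Galton--Watson tree and that it coincides (as an unlabelled bipartite graph limit) with that of the Erd\H{o}s--R\'enyi incidence graphs. The unimodularity and the ``involution invariance'' bookkeeping, together with correctly handling the type/segment labels (which must be shown to be ignorable), is where the real work lies; the rest is an application of Lelarge's theorem plus the known non-orientability of dense Erd\H{o}s--R\'enyi hypergraphs, and the observation that peelable $\Rightarrow$ orientable.
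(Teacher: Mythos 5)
Your overall strategy coincides with the paper's: reduce peelability to orientability, view orientations as left-perfect matchings of the incidence graph, and use Lelarge's theorem to transfer the linear orientation deficit of $H^k_{n,cn}$ (for $c > c_k^*$) to the fuse graph. The gap sits exactly where you suspect it: for any \emph{fixed} $\ell$ the random weak limit of $F(n,k,c,\ell)$ is \emph{not} the Erd\H{o}s--R\'enyi limit. For every radius $r$ a constant (in $n$) fraction of vertices lies within graph distance $r$ of the boundary segments, these see fewer edge types, and this fraction does not vanish as $r$ grows; so the limit is a mixture over root segments of the segment-dependent trees rather than the single Galton--Watson tree with $\Po(kc)$ offspring. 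Lelarge's theorem applied to $F$ itself therefore yields \emph{some} limiting matching ratio, but gives no way to equate it with the Erd\H{o}s--R\'enyi ratio $\gamma$. Your first proposed fix (letting $\ell\to\infty$ after $n$) also does not match the quantifier structure of $f_k$: to show $f_k\le c_k^*$ you must exhibit, for each $c>c_k^*$, a single fixed $\ell$ for which $F(n,k,c,\ell)$ fails to be peelable whp, and Lelarge's theorem speaks about one sequence of graphs with one limit.

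The paper's resolution is a concrete surgery that your sketch would need to supply: glue the last $k-1$ segments of $F$ onto the first $k-1$ segments, obtaining a ``seamless'' graph $\widetilde F$ in which every vertex and edge is a bulk vertex or edge; its local weak limit then genuinely is the Erd\H{o}s--R\'enyi limit, so by Lelarge the largest partial orientation of $\widetilde F$ has size $(1-\delta)c\ell n + o(n)$, where $(1-\delta)cn+o(n)$ is the size of the largest partial orientation of $H^k_{n,cn}$. Ungluing (splitting the $(k-1)n$ identified vertex pairs) increases the maximum matching by at most $(k-1)n$, and choosing $\ell = k/(\delta c)$ makes this loss strictly smaller than the deficit $\delta c\ell n = kn$, so $F$ retains a linear orientation deficit and is not orientable, hence not peelable, whp. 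This is essentially your second fix made precise; without the gluing (or an equivalent argument bounding how much the mixture limit's matching ratio can differ from $\gamma$), the limit-identification step you correctly flag as the main obstacle remains open.
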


\begin{proof}
    Let $c = c_k^* + ε$. We need to show that there exists $ℓ ∈ ℕ$ such that the fuse graph $F=F(n,k,c,ℓ)$ is not peelable whp.
    
    Let $H = H^k_{n,cn}$ be the $k$-ary Erdős-Renyi random hypergraph with density $c$. By choice of $c$, $H$ is not orientable whp. More strongly even, there exists $δ = δ(ε) > 0$ such that the largest \emph{partial orientation}, i.e.~the largest subset of the edges that can be oriented, has size $(1-δ)cn + o(n)$ whp, see for instance \cite{L:A_New_Approach:2012}.
    
    \def\TF{\widetilde{F}}
    We set $ℓ = \frac{k}{δc}$ and consider $F$ as well as the hypergraph $\TF$ where the vertices $i$ and $i+nℓ$ for all $i ∈ \{1,…,(k-1)n\}$ are merged. This “glues” the last $k-1$ segments of $F$ on top of the first $k-1$ segments of $F$, making $\TF$ a “seamless” version of our construction. Crucially, the \emph{random weak limit} of $\TF$ and $H$ coincide, i.e.~for any constant $R ∈ ℕ$ the distribution of the $R$-neighbourhood $N_{\TF}(v,R)$ of a random vertex $v$ of $\TF$ has the same limit (as $n → ∞$) as the distribution of the $R$-neighbourhood $N_H(v,R)$ of a random vertex $v$ of $H$.\footnote{The common limit of the incidence graphs of $\TF$ and $H$ is the bipartite unimodular Galton-Watson tree described in \cite[Section 4]{L:A_New_Approach:2012}. Standard arguments, e.g. from \cite{K:Poisson:2006,Leconte:Cuckoo:2013} suffice to establish the identity.}
    It now follows from \cite[Theorem 4.1]{L:A_New_Approach:2012} that the size of the largest partial orientation of $\TF$  is essentially also a $(1-δ)$-fraction of the number of edges, namely $(1-δ)cℓn+o(n)$ whp. Switching from $\TF$ back to $F$ can increase the size of a largest partial orientation by at most $(k-1)n$ to $(1-δ+\frac{k-1}{cℓ})cℓn+o(n) = (1-\frac{δ}{k})cℓn+o(n)$ whp. Thus $F$ is not orientable whp and therefore not peelable whp.
\end{proof}

\section{Experiments}
\label{sec:experiments}
\def\construct{\textsf{construct}\xspace}
\def\query{\textsf{query}\xspace}

We used our hypergraphs to implement retrieval data structures and compare it to existing implementations.

A \emph{$1$-bit retrieval data structure} for a universe $\U$ is a pair of algorithms \construct and \query, where the input of \construct is a set $S ⊆ \U$ of size $m = |S|$ and $f: S → \{0,1\}$. If \construct succeeds, then the output is a data structure $D_f$ such that $\query(D_f,x) = f(x)$ for all $x ∈ S$. The output of $\query(D_f,y)$ for $y ∈ \U \setminus S$ may yield an arbitrary element of $\{0,1\}$. The interesting setting is when the data structure may only occupy $\O(m)$ bits. See \cite{BPZ:Practical:2013,BPZ:Simple:2007,DP:Succinct:2008,Vigna:Fast-Scalable-Construction-of-Functions:2016,P:An_Optimal:2009}.

One approach is to map each element $x ∈ S$ to a set $e_x ⊂ [N]$ via a hash function, where $N = m/c$ for some desired edge density $c$. One then seeks a solution $z : [N] → \{0,1\}$ satisfying $\bigoplus_{v ∈ e_x} z(v) = f(x)$ for all $x ∈ S$. The bit-vector $z$ and the hash function then form $D_f$. A query simply evaluates the left hand side of the equation for $x$ to recover $f(x)$. To compute $z$, we consider the hypergraph $H = ([N], \{e_x, x ∈ S\})$. A peelable vertex $v ∈ [N]$ only contained in one edge $e_x$ corresponds to a variable $z(v)$ only occuring in the equation associated with~$x$. It is thus easy to see that if $H$ is peelable, repeated elimination and back-substitution yields $z$ in $\O(m)$ time.

We implemented the following peeling-based variations and report results in \footnotemark\ in \cref{tab:results}. By the \emph{overhead} of an implementation we mean $\frac{N'}{m}-1$ where $N' ≥ N$ is the total number of bits used, including auxiliary data structures.
\footnotetext{%
    Experiments were performed on an desktop computer with an Intel${}^{\textrm{\textregistered}}$\,Core i7-2600 Processor @~3.40GHz. In all cases, the data set $S$ contains the first $m = 10^7$ URLs from the \texttt{eu-2015-host} dataset gathered by \cite{BMSV:Crawls:2014} with ${≈}$80 bytes per key, and $f \colon \U → \{0,1\}$ is taken to be the parity of the string length. As hash function we used \textsc{MurmurHash3\_x64\_128} \cite{Appleby:MurmurHash3:2012}. If more than 128 hash bits were needed, techniques resembling double-hashing were used to generate additional bits to avoid another execution of murmur. Reported query times are averages obtained by querying all elements of the data set once. They include the roughly 25 ns needed to evaluate murmur on average. The reported numbers are medians of 5 executions. 
}
\begin{table}
    \begin{tabular}{ccrrr}
        \toprule
        & \multirow{2}{*}{Configuration} & \multirow{2}{*}{Overhead} & \construct                              & \query\\
        &                        &                 &  {\small $[\ms/\textrm{key}]$}&{\small $[\ns]$}\\\midrule
        Botelho et al. \cite{BPZ:Practical:2013} & c = 0.81 & 23.5\% & 0.32 & 59\\
        $\langle$Fuse Graphs$\rangle$ &$c = 0.910, k = 3, ℓ = 100$&12.1\%&0.29&55\\
        $\langle$Fuse Graphs$\rangle$ &$c = 0.960, k = 4, ℓ = 200$&5.7\%&0.29&60\\
        $\langle$Fuse Graphs$\rangle$ &$c = 0.985, k = 7, ℓ = 500$&2.7\%&0.38&74\\
        \hdashline
        Luby et al. \cite{LMSS:Efficient_Erasure:2001} & $c = 0.9, D = 12$ & 11.1\% & 0.79 & 94\\
        Luby et al. \cite{LMSS:Efficient_Erasure:2001} & $c = 0.99, D = 150$ & 1.1\% & 0.87 & 109\\
        \midrule
        Genuzio et al. \cite{Vigna:Fast-Scalable-Construction-of-Functions:2016} &$c = 0.91, k = 3, C = 10^4$&10.2\%&1.30&58\\
        Genuzio et al. \cite{Vigna:Fast-Scalable-Construction-of-Functions:2016} &$c = 0.97, k = 4, C= 10^4$&3.4\%&2.20&64\\
        the authors \cite{DW:Retrieval-log-extra-bits:2019} &$c = 0.9995, ℓ = 16, C=10^4$&0.25\%&2.47&56\\
        \bottomrule
    \end{tabular}
    \caption{Overheads and average running times per key of various practical retrieval data structures.}
    \label{tab:results}
\end{table}
\begin{description}
        •[Botelho et al. \cite{BPZ:Practical:2013}] $H$ is a $3$-ary Erdős-Renyi hypergraph with an edge density below the peelability threshold $c₃ ≈ 0.818$. Construction via peeling and queries are very fast, but the overhead of $23\%$ is sizeable (i.e.~$D_f$ occupies roughly $1.23m$ bits).
        •[Fuse Graphs.] The edges are distributed such that $H$ is a fuse graph. Recall that the edge density is $c\frac{ℓ}{ℓ+k-1}$. Note that we let $ℓ$ grow with $k$ to keep the density close to $c$. We still keep $ℓ$ in a moderate range, as our construction relies on $n \gg ℓ$.
        •[Luby et al. \cite{LMSS:Efficient_Erasure:2001}] The edges are distributed such that $H$ is the peelable hypergraph from \cite{LMSS:Efficient_Erasure:2001} already mentioned on page \pageref{page:LMSS}. To our knowledge these hypergraphs have not been considered in the context of retrieval. They seem to be particularly well suited to achieve very small overheads at the cost of larger construction and mean query times compared to our other approaches. Note that the largest edge size is $D+4$ and the worst-case query time therefore much larger than the reported average query time.
\end{description}
For reference, we also implemented two recent retrieval data structures that do not rely on peeling but solve linear systems \cite{DW:Retrieval-log-extra-bits:2019,Vigna:Fast-Scalable-Construction-of-Functions:2016}. There, to counteract cubic solving time, the input is partitioned into chunks of size $C$. Especially \cite{DW:Retrieval-log-extra-bits:2019} achieves much smaller overheads than what is feasible with peeling approaches, with the downside of being much slower and more complicated.

Overall, it seems using fuse graphs in retrieval data structures has a chance of outperforming existing approaches when moderate memory overheads of $≈ 5\%$ are acceptable.

However, more research is required to explore the complex space of possible input sizes, configurations of the data structures and trade-offs between overhead and runtime. Our implementations are configured reasonably, but arbitrary in some aspects. A full discussion is beyond the scope of this paper.





\section{Conclusion}
\label{sec:conclusion}

We introduced for all $k ∈ ℕ$ a new family of $k$-uniform hypergraphs where the vertex set is partitioned into a large but constant number of segments. Each edge chooses a random range of $k$ consecutive segments and one random incidence in each of them.

\def\conv{\ \smash{\scalebox{0.8}{$\stackrel{k → ∞}{\longrightarrow}$}}\ }

While we have no asymptotic results on the resulting peelability thresholds $f_k$, at least for small $k$ they are remarkably close to $c_k^*$ with $0 ≤ c_k^* - f_k ≤ 10^{-5}$ for $k ∈ \{3,4,5,6,7\}$. In other words, $f_k$ almost coincides with the \emph{orientability} threshold $c_k^*$ of Erdős-Renyi hypergraphs and significantly exceeds their peelability threshold $c_k$. Note that $c_k^* = 1 - (1+o_k(1))e^{-k} \conv 1$ (see \cite[page 3]{FP:Sharp:2012}) while $c_k \conv 0$ (see e.g. \cite{Molloy05:Cores-in-random-hypergraphs}). When plugging our hypergraphs into the retrieval framework by~\cite{BPZ:Practical:2013}, we obtained corresponding improvements with respect to memory usage, with no discernible downsides.

\subparagraph{Future Experiments.} While our experiments on retrieval data structures are  promising, it is unclear how robustly the advantages translate to other practical settings where peelable hypergraphs are used, say when implementing Invertible Bloom Lookup Tables \cite{GM:Invertable:2011}. There are hidden disadvantages of our hypergraphs not considered in this paper – for instance the number of rounds needed to peel our hypergraphs is higher, possibly hurting parallel peeling algorithms – as well as hidden advantages – peeling in external memory, a setting considered in \cite{BBOVV:Cache-Oblivious-Peeling:14}, is easy due to the locality of the edges.

\subparagraph{A Theoretical Question.} Given our results, it is natural to suspect a fundamental connection between $f_k$ and $c_k^*$. Quite possibly, the tiny gap that seems to remain between the values – clearly negligible from a practical perspective – is merely an artefact of the discreteness of segments in our construction.

This discreteness, while heavily used in our arguments, may in fact be dispensable. Indeed, we believe the key idea behind our hypergraphs is \emph{limited bandwidth} where a hypergraph on vertex set $[n]$ has bandwidth at most $d$ if each edge $e$ satisfies $\max_{v ∈ e} v - \min_{v ∈ e} v < d$ (the incidence matrix can then be sorted to resemble a bandmatrix). Such a hypergraph can be generated by choosing for each edge a random range of $d$ consecutive vertices and $k$ incidences independently and uniformly at random from that range. In experiments with $k = 3$ and $d = εn$, such hypergraphs performed similar to the hypergraphs we analysed (with $k = 3$ and $ℓ≈1/ε$). Note that there are no discrete segments in the modified construction. It would be nice to see whether in such a variation peelability and orientability are more elegantly and more intimitely linked.

    \bibliography{bibliographie}
		
\end{document}